\documentclass[12pt]{article}
\usepackage{fullpage}
\usepackage{amsmath}
\usepackage{amsfonts}
\usepackage{amssymb}
\usepackage{amsthm}
\usepackage{complexity}
\usepackage{hyperref}
\usepackage{appendix}

\newcommand{\F}{\mathbb{F}}

\DeclareMathOperator{\maxrank}{\mathrm{max-rank}}
\DeclareMathOperator{\rank}{\mathrm{rank}}

\newtheorem{theorem}{Theorem}[section]
\newtheorem{corollary}[theorem]{Corollary}
\newtheorem{lemma}[theorem]{Lemma}
\newtheorem{observation}[theorem]{Observation}
\newtheorem{proposition}[theorem]{\bf Proposition}

\newtheorem*{proposition-a}{\bf Proposition}
\newtheorem*{lemma-a}{\bf Lemma}

\newtheorem{definition}[theorem]{Definition}

\newenvironment{proof-sketch}{\noindent{\bf Sketch of Proof}\hspace*{1em}}{\qed\bigskip}
\newenvironment{proof-idea}{\noindent{\bf Proof Idea}\hspace*{1em}}{\qed\bigskip}
\newenvironment{proof-of-lemma}[1]{\noindent{\bf Proof of Lemma #1}\hspace*{1em}}{\qed\bigskip}
\newenvironment{proof-attempt}{\noindent{\bf Proof Attempt}\hspace*{1em}}{\qed\bigskip}
\newenvironment{proof-of}[1]{\noindent{\bf Proof
of #1:}\hspace*{1em}}{\qed\bigskip}

\bibliographystyle{plain}

\title{Arithmetic Circuit Lower Bounds via MaxRank\footnote{Part of this work was done while the first two authors were undergraduate students at IIT Madras}}

\author{Mrinal Kumar\footnote{Department of Computer Science, Rutgers University, Piscataway, NJ 08855, USA. Email : {\tt mrinal.kumar@rutgers.edu}} \and Gaurav Maheshwari\footnote{Department of Computer Science and Engineering, Indian Institute of Technology Madras, Chennai - 600036. Email:{\tt gaurav.m.iitm@gmail.com}} \and Jayalal Sarma M.N.\footnote{Department of Computer Science and Engineering, Indian Institute of Technology Madras, Chennai - 600036. Email :{\tt jayalal@cse.iitm.ac.in}}}

\begin{document}
\maketitle
\begin{abstract}
We introduce the polynomial coefficient matrix and identify maximum rank 
of this matrix under variable substitution as a complexity measure for 
multivariate polynomials. We use our techniques to prove super-polynomial lower bounds against several classes of non-multilinear arithmetic circuits. In particular, we obtain the following results :
\begin{itemize}
\item As our main result, we prove that any homogeneous depth-$3$ circuit for computing the product of $d$ matrices of dimension $n \times n$ requires $\Omega(n^{d-1}/2^d)$ size. This improves the lower bounds in \cite{NW95} when $d=\omega(1)$.
\item There is an explicit polynomial on $n$ variables and degree at most $\frac{n}{2}$ for which any depth-3 circuit $C$ of product dimension at most $\frac{n}{10}$ (dimension of the space of affine forms feeding into each product gate) requires size $2^{\Omega(n)}$. This generalizes the lower bounds against diagonal circuits proved in \cite{S07}. Diagonal circuits are of product dimension $1$.
\item We prove a $n^{\Omega(\log n)}$ lower bound on the size of
  product-sparse formulas. By definition, any multilinear formula
  is a product-sparse formula. Thus, our result extends the known
  super-polynomial lower bounds on the size of multilinear formulas
  \cite{Raz06}.

\item We prove a $2^{\Omega(n)}$ lower bound on the size of partitioned arithmetic branching programs. This result extends the known exponential lower bound on the size of ordered arithmetic branching programs~\cite{J08}.
\end{itemize}
\end{abstract}

\section{Introduction} \label{sec:intro}
Arithmetic circuits is a fundamental model of computation for polynomials. 
Establishing the limitations of polynomial sized arithmetic circuits is a central open question in the area of algebraic complexity(see~\cite{SY10} for a detailed survey). 
%
%
One of the surprises in the area was the result due to Agrawal and Vinay~\cite{AV08} where they show that if a polynomial in $n$ variables of degree $d$ (linear in $n$) can be computed by arithmetic circuits of size $2^{o(n)}$, then it can be computed by depth-$4$ circuits of size $2^{o(n)}$. The parameters of this result was further tightened by Koiran~\cite{Koi10}. These results explained the elusiveness of proving lower bounds against even depth-$4$  circuits. For depth-$3$ circuits, the best known general result (over finite fields) is an exponential lower bound due to Grigoriev and Karpinski~\cite{GK98} and Grigoriev and Razborov~\cite{GR98}. Lower bounds for restricted classes of depth-$3$ and depth-$4$ circuits are studied in \cite{ASSS12,NW95,SW01} .

One class of models which has been extensively studied is when the gates are restricted to compute multilinear polynomials. Super-polynomial lower bounds are known for the size of multilinear formulas computing the permanent/determinant polynomial\cite{Raz09}. However, even under this restriction proving super-polynomial lower bounds against arbitrary multilinear arithmetic circuits is an open problem (see \cite{SY10}). The parameter identified by~\cite{Raz06}, which showed the limitations of multilinear formulas, was the rank of a matrix associated with the circuit - namely the partial derivatives matrix\footnote{An exponential sized matrix associated with the multilinear polynomial with respect to a partition of the variables into two sets. See Section~\ref{sec:prelims} for the formal definition.}. The method showed that there exists
a partition of variables into two sets such that the rank of the partial derivatives matrix of any polynomial computed by the model is upper bounded by a function of the size of the circuit. But there are explicit polynomials for which the rank of the partial derivatives matrix is high. 
This program has been carried out for several classes of multilinear polynomials and several variants of multilinear circuits\cite{DMPY11, J08, RY08b, Raz06, Raz09, RSY08}. However, this technique has inherent limitations when it comes to proving lower bounds against non-multilinear circuits because the partial derivatives matrix, in the form that was studied, can be considered only for multilinear circuits.

In this work, we generalize this framework to prove lower bounds against certain classes of non-multilinear arithmetic circuits. This generalization also shows that the multilinearity restriction in the above proof strategy can possibly be eliminated from the circuit model side. Hence it can also be seen as an approach towards proving lower bounds against the general arithmetic circuits.

We introduce a variant of the partial derivatives matrix where the entries will be polynomials instead of constants - which we call the {\em polynomial coefficient matrix}. Instead of rank of the partial derivatives matrix, we analyze the {\em maxrank} - the maximum rank of the polynomial coefficient matrix\footnote{When it is clear from the context, we drop the matrix as well as the partition. By the term, maxrank of a polynomial, we denote the maximum rank of the polynomial coefficient matrix corresponding to the polynomial with respect to the partition in the context.} under any substitution for the variables from the underlying field. We first prove how the maxrank changes under arithmetic operations. 
These tools are combined to prove upper bounds on maxrank of various restrictions of arithmetic circuits.


In \cite{NW95}, it was proved that any homogeneous depth-$3$ circuit for multiplying $d$ $n \times n$ matrices requires $\Omega \left(n^{d-1}/d! \right)$ size. We use our techniques to improve this result in terms of the lower bound. Our methods are completely different from \cite{NW95} and this demonstrates the power of this method beyond the reach of the original partial derivatives matrix method due to Raz~\cite{Raz06}. We prove the following as the main result of this paper.

\begin{theorem}[Main Result]
\label{introthm:hom}
Any homogeneous depth-$3$ circuit for computing the product of $d$ matrices of dimension $n \times n$ requires $\Omega(n^{d-1}/2^d)$ size.
\end{theorem}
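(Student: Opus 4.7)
The plan is to apply the \maxrank\ framework with a carefully chosen variable partition, exhibit an entry of $X_1 X_2 \cdots X_d$ whose \maxrank\ is at least $n^{d-1}$, and upper bound the \maxrank\ of any polynomial computed by a homogeneous depth-$3$ circuit of size $s$ by $s \cdot 2^d$. Comparing the two bounds yields $s = \Omega(n^{d-1}/2^d)$.

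For the partition, I would put the variables of the odd-indexed matrices $X_1, X_3, X_5, \ldots$ into $Y$ and those of the even-indexed matrices $X_2, X_4, \ldots$ into $Z$ (with no residual third block). The key combinatorial feature is that in the standard expansion
$$(X_1 \cdots X_d)_{1,1} \;=\; \sum_{k_1, \ldots, k_{d-1} \in [n]} (X_1)_{1, k_1} (X_2)_{k_1, k_2} \cdots (X_d)_{k_{d-1}, 1},$$
each inner index $k_t$ appears as a column index of $X_t$ and as a row index of $X_{t+1}$, so by the alternating parity of the partition it occurs in both the $Y$-factor and the $Z$-factor of each summand. Consequently distinct tuples $(k_1, \ldots, k_{d-1})$ produce distinct $Y$-monomials and distinct $Z$-monomials, and the polynomial coefficient matrix contains $n^{d-1}$ nonzero entries sitting in $n^{d-1}$ distinct rows and $n^{d-1}$ distinct columns. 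Its rank, and hence its \maxrank, is therefore at least $n^{d-1}$.

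For the upper bound, I would invoke subadditivity of \maxrank\ under addition (which reduces the analysis to a single product gate) and then bound the \maxrank\ of $\prod_{j=1}^d \ell_j$ where the $\ell_j$ are homogeneous linear forms. Writing each $\ell_j = a_j(Y) + b_j(Z)$ and expanding distributively gives $\sum_{S \subseteq [d]} \bigl(\prod_{j \in S} a_j(Y)\bigr)\bigl(\prod_{j \notin S} b_j(Z)\bigr)$, a sum of $2^d$ terms each of which is the product of a $Y$-only polynomial and a $Z$-only polynomial. Any such separable polynomial has coefficient matrix equal to the outer product of two coefficient vectors and hence rank at most $1$, so $\maxrank(\prod \ell_j) \leq 2^d$. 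Summing over the $s$ product gates of the circuit yields \maxrank\ at most $s \cdot 2^d$ for the whole circuit, and comparing with the lower bound $n^{d-1}$ established above finishes the proof.

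The main point that will require care is aligning these informal bilinear-rank calculations with the paper's formal definition of the polynomial coefficient matrix and the preliminary arithmetic-operation lemmas it develops. In particular, the argument relies on (i) subadditivity of \maxrank\ under sums and (ii) a product-of-$d$-linear-forms bound of the shape $\maxrank \leq 2^d$ that follows from counting separable summands in the expansion of $\prod(a_j + b_j)$; both are exactly the kinds of steps the earlier sections are set up to provide. The conceptual crux is the choice of the alternating partition, which is what makes the $n^{d-1}$ lower count tight to within the $2^d$ factor contributed by the depth-$3$ upper bound.
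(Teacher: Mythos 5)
Your proposal is correct and follows essentially the same approach as the paper: the same alternating (odd-matrices-to-$Y$, even-matrices-to-$Z$) partition, the same observation that each summation index $k_t$ is recorded in both the $Y$-part and the $Z$-part so that the $n^{d-1}$ monomials form a permutation submatrix of rank $n^{d-1}$ (the paper phrases this as a path-counting argument in a layered graph, which is the same computation), and the same $2^d$ upper bound per product gate obtained by splitting each homogeneous linear form into its $Y$- and $Z$-parts (the paper packages this as $d$ applications of a doubling corollary rather than one distributive expansion, but the two are equivalent). The one step you leave implicit is the reduction to product gates of fan-in exactly $d$ — for a homogeneous circuit computing a degree-$d$ polynomial, any product gate of other fan-in outputs a homogeneous polynomial of the wrong degree and so contributes nothing, which is exactly where homogeneity is used; the paper states this explicitly.
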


Notice that compared to the bounds in \cite{NW95}, our bounds are stronger when $d=\omega(1)$. Very recently, Gupta {et al}\cite{GKKS12} studied the model of homogeneous circuits and proved a strong lower bound parameterized by the bottom fan-in. They studied depth-$4$ circuits ($\Sigma\Pi\Sigma\Pi$) and showed that if the fan-in of the bottom level product gate of the circuits is $t$, then any homogeneous depth-$4$ circuit computing the permanent (and the determinant) must have size $2^{\Omega(\frac{n}{t})}$. In particular, this implies $2^{\Omega(n)}$ lower bound for any depth-$3$ homogeneous circuit computing the permanent (and the determinant) polynomial of $n \times n$ matrices ($n^2$ variables). However, we remark that Theorem~\ref{introthm:hom} is addressing the iterated matrix multiplication polynomial and hence is not directly subsumed by the above result. Moreover, the techniques used in \cite{GKKS12} are substantially different from ours.

We apply our method to depth-$3$ circuits where space of the affine forms feeding into each product gate in the circuit is of limited dimension. Formally,  a depth-$3$ $\Sigma\Pi\Sigma$ circuit $C$ is said to be of product dimension $r$ if for each product gate $P$ in $C$, where $P = \Pi_{i=1}^{d}{L_i}$, where $L_i$ is an affine form for each $i$, the dimension of the span of the set $\{L_i\}_{i \in [d]}$ is at most $r$. 

We prove exponential lower bounds on the size (in fact, the top fan in) of depth-$3$ circuits of bounded product dimension for computing an explicit function.

\begin{theorem}
\label{introthm:lowrank}
There is an explicit polynomial on $n$ variables and degree at most $\frac{n}{2}$ for which any $\Sigma\Pi\Sigma$ circuit $C$ of product dimension at most $\frac{n}{10}$ requires size $2^{\Omega(n)}$.
\end{theorem}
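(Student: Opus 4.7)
My plan follows the polynomial coefficient matrix paradigm established earlier in the paper. Partition the $n$ variables into equal halves $Y = \{y_1, \ldots, y_{n/2}\}$ and $Z = \{z_1, \ldots, z_{n/2}\}$, and take as the hard polynomial
\[
 f(Y, Z) \;=\; \sum_{\substack{S \subseteq [n/2] \\ |S| = n/4}} \prod_{i \in S} y_i \prod_{i \in S} z_i,
\]
which is homogeneous of degree $n/2$ on $n$ variables. The submatrix of its polynomial coefficient matrix indexed on both sides by multilinear monomials of degree $n/4$ is the $\binom{n/2}{n/4}$-dimensional identity, so $\maxrank(f) \geq \binom{n/2}{n/4} \geq 2^{n/2}/\mathrm{poly}(n)$.

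The main technical step is to upper-bound the maxrank of a single product gate of product dimension at most $r$. Write $P = \prod_j L_j$ with $\dim \mathrm{span}\{L_j\} \leq r$, pick a basis $M_1, \ldots, M_r$ of the span, and express $P = Q(M_1, \ldots, M_r)$ for some $r$-variate polynomial $Q$. Split each basis form as $M_k = A_k(Y) + B_k(Z)$, absorbing the constant into $A_k$ so that $B_k$ is linear and homogeneous in $Z$. Expanding by Taylor in the $B$-direction gives
\[
 P \;=\; \sum_{\gamma \in \mathbb{Z}_{\geq 0}^{r}} \frac{1}{\gamma!}\, \bigl(\partial^{\gamma} Q\bigr)\bigl(A_1(Y), \ldots, A_r(Y)\bigr)\, \prod_{k=1}^{r} B_k(Z)^{\gamma_k},
\]
which exhibits $P$ as a sum of products of a $Y$-polynomial and a $Z$-polynomial, each contributing rank one to the coefficient matrix. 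Since $\prod_k B_k(Z)^{\gamma_k}$ is homogeneous in $Z$ of degree $|\gamma|$ and the coefficient matrix sees only multilinear $Z$-monomials (of degree $\leq n/2$), summands with $|\gamma| > n/2$ drop out. This yields $\maxrank(P) \leq \binom{n/2 + r}{r}$ independent of the fan-in of $P$.

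Combining this with the subadditivity of maxrank under addition (already established in the paper), a $\Sigma\Pi\Sigma$ circuit with top fan-in $s$ and product dimension at most $r = n/10$ has $\maxrank \leq s \cdot \binom{3n/5}{n/10}$. The entropy estimate $\binom{m}{k} \leq 2^{H(k/m)\,m}$ with $k/m = 1/6$ gives $\binom{3n/5}{n/10} \leq 2^{H(1/6)\cdot(3n/5)} \leq 2^{0.4 n}$, and comparing with $\maxrank(f) \geq 2^{n/2}/\mathrm{poly}(n)$ forces $s \geq 2^{(1/2 - 2/5) n - O(\log n)} = 2^{\Omega(n)}$. Since the top fan-in is a lower bound on the total size, this completes the argument.

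The main obstacle I foresee is the possibility that product gates in a $\Sigma\Pi\Sigma$ circuit have very high fan-in, which naively weakens the Taylor count to $\binom{\deg P + r}{r}$ and is useless when $\deg P$ is unbounded. The saving observation is that summands whose $Z$-factor has homogeneous degree exceeding $|Z| = n/2$ contribute nothing to the multilinear coefficient matrix, capping the effective count at $\binom{n/2 + r}{r}$ irrespective of $\deg P$. A secondary delicate point is choosing a hard polynomial of degree at most $n/2$ that packs a large identity into its coefficient matrix; the symmetric collision polynomial above is tailored precisely so the middle layer of multilinear monomials witnesses $\binom{n/2}{n/4}$ linearly independent rows, which just beats the $2^{0.4 n}$ upper bound.
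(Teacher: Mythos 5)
Your hard polynomial and the overall framework (rank of the polynomial coefficient matrix, subadditivity, entropy estimate for the final count) match the paper, and the Taylor expansion of a product gate of product dimension $r$ into $\F[Y]$-times-$\F[Z]$ rank-one pieces is a reasonable route. However, the load-bearing step --- the ``saving observation'' that summands with $|\gamma| > n/2$ drop out because ``the coefficient matrix sees only multilinear $Z$-monomials'' --- is false, and without it your bound $\maxrank(M_{P}) \le \binom{n/2+r}{r}$ collapses back to the useless $\binom{\deg P + r}{r}$. The rows and columns of $M_f$ are \emph{indexed} by multilinear monomials, but the \emph{entries} $M_f(p,q)$ are arbitrary polynomials in the variables of $p$ and $q$: a monomial $y^a z^b$ of $f$ with large exponents still contributes to the entry $M_f(p,q)$ where $p=\prod_{a_i\ge 1} y_i$ and $q=\prod_{b_j\ge 1} z_j$. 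Concretely, take $m = n/2 = 2$, $r = 1$, $P = (y_1 + y_2 + z_1 + z_2)^4$. Your bound predicts $\maxrank(M_P) \le \binom{3}{1} = 3$, but under the substitution $y_1 = z_1 = 1$, $y_2 = z_2 = 2$ one gets
\[
M_P|_S = \begin{pmatrix} 0 & 1 & 8 & 32 \\ 1 & 14 & 32 & 48 \\ 8 & 32 & 56 & 60 \\ 32 & 48 & 60 & 24 \end{pmatrix},
\]
whose determinant is $32976 \neq 0$, so $\maxrank(M_P) = 4$. The degree-$3$ and degree-$4$ Taylor terms (e.g., $B_1^3 = (z_1+z_2)^3$) do not vanish in $M_P$; they land in entries like $M_P(\cdot, z_1 z_2)$.

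The missing idea, which the paper uses explicitly, is to exploit that the target polynomial is \emph{homogeneous} of degree $d = n/2$. One may then replace each product gate $P^j$ by its degree-$d$ homogeneous slice $P_d^j$ (the lower- and higher-degree parts cancel across the top sum), and only bound $\maxrank(M_{P_d^j})$. The paper does this via Fischer's identity (every degree-$d$ slice is a sum of at most $\binom{d+r}{r}$ $d$-th powers of homogeneous linear forms, each of maxrank at most $d+1$), giving $\maxrank(M_{P_d^j}) \le (d+1)\binom{d+r}{r}$. Your Taylor expansion works just as well \emph{after} this slicing step: if you first write $P_d^j$ as a degree-$d$ homogeneous polynomial $Q_d(\ell_1,\ldots,\ell_r)$ in the \emph{homogeneous} parts $\ell_i$ of a basis of the affine span, and split $\ell_i = a_i(Y) + b_i(Z)$, then the Taylor expansion in the $b$-direction genuinely terminates at $|\gamma| \le d$, yielding at most $\binom{d+r}{r}$ rank-one terms and even dispensing with the extra $(d+1)$ factor. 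With that fix your entropy calculation goes through unchanged.
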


In \cite{S07}, the author studies diagonal circuits, which are depth-$3$ circuits where each product gate is an exponentiation gate. Clearly, such a product gate can be visualized as a product gate with the same affine form being fed into it multiple times. Thus, these circuits are of product dimension $1$, and our lower bound result generalizes size lower bounds against diagonal circuits.

Note that the product dimension of a depth-$3$ circuit is different from the dimension of the span of all affine forms computed at the bottom sum gates of a $\Sigma\Pi\Sigma$ circuit. We will show that this parameter, which we refer to as the total dimension of the circuit, when bounded, makes the model non-universal.




For our next result, we generalize the model of syntactic multilinear formulas to product-sparse formulas (see section~\ref{sec:prelims} for a definition). These formulas can compute non-multilinear polynomials as well. We show the following theorem regarding this model using our methods.
\begin{theorem}
Let $X$ be a set of $2n$ variables and let $f \in \F[X]$ be a full 
max-rank polynomial. Let $\Phi$ be any $(s,d)$-product-sparse formula 
of size $n^{\epsilon \log n}$, for a constant $\epsilon$. If $sd = o(n^{1/8})$, then $f$ cannot be computed by $\Phi$.
\end{theorem}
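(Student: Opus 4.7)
The plan is to adapt Raz's random-partition technique for multilinear formula lower bounds to the non-multilinear setting of product-sparse formulas, replacing the partial derivatives matrix with the polynomial coefficient matrix and the rank argument with a \maxrank\ argument supported by the arithmetic rules for \maxrank\ developed earlier in the paper. The overall skeleton has three steps: identify a structural decomposition of small product-sparse formulas under a random balanced partition of $X$, bound the \maxrank\ of each term in this decomposition, and conclude by contradiction with the assumption that $f$ has full \maxrank.

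First, I would fix a uniformly random partition of $X$ into two equal-sized sets $Y$ and $Z$ and prove an analogue of Raz's central-gate lemma: for any $(s,d)$-product-sparse formula $\Phi$ of size $n^{\epsilon \log n}$, with positive probability over the partition, the polynomial computed by $\Phi$ can be written as a sum of at most $n^{\epsilon \log n}$ products of the form $g \cdot h$, where the variables of $g$ and $h$ are ``almost'' separated by the partition $(Y,Z)$, with the amount of leakage (variables crossing the partition) per term controlled by the product-sparseness parameters $s$ and $d$. The intuition is that in a product-sparse formula each product gate can mix cross-partition variables only in a structurally limited way, and this limitation is inherited along paths from leaves to the root.

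Second, using the rules for how \maxrank\ behaves under sum and product (established earlier in the paper), I would bound the \maxrank\ contribution of each term $g \cdot h$: substituting generic field values for the $O(sd)$ leakage variables reduces it to a disjoint-variable product whose \maxrank\ is at most $\maxrank(g) \cdot \maxrank(h)$, each factor living on a strict subset of $Y$ or $Z$ respectively. Accumulating the leakage along the $O(\log n)$ depth of the formula and summing over all $n^{\epsilon \log n}$ terms gives a total \maxrank\ bound of the form $2^{n - \Omega(n)} \cdot n^{O(\epsilon \log n)} \cdot 2^{O(sd \cdot \log n)}$, which remains strictly below $2^n$ whenever $sd = o(n^{1/8})$ and $\epsilon$ is a sufficiently small constant. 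Since $f$ has full \maxrank\ equal to $2^n$, this contradicts the assumption that $\Phi$ computes $f$.

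The main obstacle I anticipate is making the structural decomposition step quantitatively tight. For ordinary multilinear formulas, disjointness of variables at the central gate is automatic; for product-sparse formulas one must quantify the non-multilinear ``slack'' at each product gate in a way that is simultaneously compatible with the random-partition argument (which needs the partition to almost-balance many gates) and with the \maxrank\ rules (which need the slack to remain small after substitution). Threading this balance through a formula of size $n^{\epsilon \log n}$ and depth $O(\log n)$ without the cumulative leakage overwhelming the $2^n$ target is what ultimately pins down the threshold $sd = o(n^{1/8})$.
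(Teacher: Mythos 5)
Your high-level strategy — random partition, structural reduction, \maxrank\ bounds per piece, contradiction against full \maxrank\ — matches the paper, and you correctly identify that this is an adaptation of Raz's argument. However, there are two concrete problems with your plan, one conceptual and one quantitative.

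The conceptual gap is in your mechanism for handling the sparse product gates. You propose ``substituting generic field values for the $O(sd)$ leakage variables'' to reduce a sparse product gate to a disjoint-variable product. But $s$-sparseness controls the number of \emph{monomials} (at most $2^s$), not the number of \emph{variables}: a single monomial $x_1 x_2 \cdots x_{2n}$ is extremely sparse yet touches every variable, so there is no bounded set of ``leakage variables'' to substitute away. The paper's mechanism is different and avoids this: Corollary~\ref{c:3} shows that if $g$ has at most $2^s$ monomials then $\maxrank(M_{fg}) \le 2^s \cdot \maxrank(M_f)$, with no substitution at all — each monomial of $g$ splits as a $Y$-part times a $Z$-part, and multiplication by a $Y$-only (or $Z$-only) polynomial leaves \maxrank\ unchanged (Proposition~\ref{p:5}). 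This is then threaded through an induction on $|\Phi_v|$ using the $k$-weak framework (Lemma~\ref{l:2}), rather than a one-shot sum-of-products decomposition.

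The quantitative accounting is also off in a way that would matter if you tried to push this through. The loss from $2^n$ in a $k$-weak formula is $k/2 = n^{1/8}/2$, not $\Omega(n)$, since the random-partition lemma (Lemma~\ref{l:3}) only delivers $k = n^{1/8}$, and the reason Raz cannot take $k$ larger is exactly the combinatorics of making many gates simultaneously $k$-unbalanced along $O(\log n)$-length central paths. Your leakage term $2^{O(sd \cdot \log n)}$ should also read $2^{sd}$: the accumulated factor in Lemma~\ref{l:2} is $2^{s \cdot d(v)}$ where $d(v)$ is the \emph{product-sparse depth} (number of non-disjoint product gates on a path), which is at most $d$ by hypothesis — it does not scale with the formula depth $O(\log n)$, because disjoint product gates are handled multiplicatively by Proposition~\ref{p:4} with no loss. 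Your formula $2^{n-\Omega(n)}$ happens to yield the right threshold $sd = o(n^{1/8})$ only by accident, since $2^{n-\Omega(n)}$ would actually tolerate $sd = o(n/\log n)$; the correct bound $2^{sd} \cdot n^{\epsilon \log n} \cdot 2^{n - n^{1/8}/2}$ is what genuinely forces $sd = o(n^{1/8})$.
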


We also generalize the above theorem to the case of preprocessesed product-sparse formulas. A preprocessed product-sparse formula can be viewed as obtained from a product-sparse formula by applying a {\em preprocessing 
step} in which each occurrence of input variables is replaced by a non-constant univariate polynomial. Different instances of the same input variable are allowed to be replaced by different univariate polynomials.


As our fourth result, we define partitioned arithmetic branching programs which are generalizations of ordered ABPs. While ordered ABP can only compute multilinear polynomials, partitioned ABP is a non-multilinear model, thus can compute non-multilinear polynomials too. Moreover, exponential lower bounds are known for ordered arithmetic branching programs~\cite{J08}.
We prove an exponential lower bound for partitioned ABPs.

\begin{theorem}
 Let $X$ be a set of $2n$ variables and $\F$ be a field. For any full max-rank 
 homogeneous polynomial $f$ of degree $n$ over $X$ and $\F$, the size of any 
 partitioned ABP computing $f$ must be $2^{\Omega(n)}$.
\end{theorem}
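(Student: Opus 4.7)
The plan is to combine the hypothesis that $f$ has the maximum possible value of $\maxrank$ with an upper bound on $\maxrank$ for polynomials computable by partitioned ABPs. My goal is to prove that any partitioned ABP $\Phi$ of size $s$ computing a polynomial $g$ on $X = Y \sqcup Z$, with the bipartition $(Y, Z)$ chosen to respect the ABP's partitioned structure, satisfies $\maxrank_{(Y,Z)}(g) \le s$. Since $f$ is a full max-rank homogeneous polynomial of degree $n$ over $2n$ variables, one has $\maxrank_{(Y,Z)}(f) = 2^n$; combining the two inequalities then forces $s \ge 2^n$.

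The technical heart of the argument is a cut-based decomposition of $\Phi$ that exploits the defining property of partitioned ABPs. Let $u$ and $t$ be the source and sink. I would identify a vertex cut $C$ of size at most $s$ such that (i) every $u$-to-$t$ path meets $C$, and (ii) for each $v \in C$, the sub-ABP from $u$ to $v$ reads only variables in $Y$ while the sub-ABP from $v$ to $t$ reads only variables in $Z$. The existence of such a cut is precisely where the partitioned hypothesis is used: the definition guarantees a partition of the edges (or layers) under which $Y$-labeled edges and $Z$-labeled edges are separated along every source-to-sink path. Once $C$ is fixed, one obtains the decomposition
\[
g \;=\; \sum_{v \in C} \Phi_{u,v}(Y) \cdot \Phi_{v,t}(Z).
\]

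Each summand is a product of a polynomial purely in $Y$ with a polynomial purely in $Z$, so its polynomial coefficient matrix factors, for every substitution, as a rank-one outer product; hence $\maxrank_{(Y,Z)}(\Phi_{u,v} \cdot \Phi_{v,t}) \le 1$. Invoking the sub-additivity of $\maxrank$ under addition (a basic property established earlier in the paper), we conclude $\maxrank_{(Y,Z)}(g) \le |C| \le s$. Combined with the $2^n$ lower bound coming from the full max-rank assumption on $f$, this yields $s = 2^{\Omega(n)}$.

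The main obstacle is establishing condition (ii) rigorously: proving that the partitioned ABP admits a cut that cleanly separates the $Y$-reading and $Z$-reading portions. For ordered ABPs as in \cite{J08} this is automatic from the read-once ordering, and the partitioned generalization is presumably designed precisely so that the vertex set decomposes into an ``upper'' part using only $Y$ and a ``lower'' part using only $Z$ (possibly after a change of coordinates matching the ABP's intrinsic partition). A subsidiary point is that the bipartition $(Y,Z)$ against which the upper bound is proved must be the same bipartition for which $f$ is assumed full max-rank; since the notion of full max-rank is a property of $f$ relative to a partition, one either quantifies $f$ over all partitions or chooses the partition induced by $\Phi$ and uses an $f$ that is full max-rank for every such partition. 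Once this structural point is settled, the remaining steps are direct applications of the $\maxrank$ calculus developed earlier in the paper.
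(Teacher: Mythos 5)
The high-level skeleton of your argument — cut the ABP at the distinguished level, write $f = \sum_{v} f_{s,v}\,f_{v,t}$, bound the maxrank of each summand, and finish with subadditivity — is exactly what the paper does. However, your structural claim (ii) is stronger than what the definition of a partitioned ABP actually guarantees, and this is where the argument breaks.

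You assert that for each $v$ on the cut, $\Phi_{u,v}$ reads only $Y$ \emph{and} $\Phi_{v,t}$ reads only $Z$, so that each summand $\Phi_{u,v}\cdot\Phi_{v,t}$ is a pure $Y$-part times a pure $Z$-part and hence has maxrank at most $1$. But the definition in the paper only guarantees, for each $v$ in the cut level $L_i$, that \emph{one} side is pure (either $X_{s,v}\subseteq\{x_{\pi(1)},\ldots,x_{\pi(n)}\}$, or $X_{v,t}\subseteq\{x_{\pi(n+1)},\ldots,x_{\pi(2n)}\}$), while the \emph{other} side is only bounded in the total number of variables it reads ($\le 2n(1-\alpha)$) and may mix $Y$ and $Z$ freely. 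So the summands do not in general factor as (pure $Y$)$\times$(pure $Z$), and their maxrank is not $1$. The correct argument uses Proposition~\ref{p:5} to drop the pure-side factor (multiplication by a polynomial in only $Y$ or only $Z$ does not increase maxrank) and then Proposition~\ref{p:2} on the remaining factor to get a bound of $2^{n(1-\alpha)}$ per summand, which combined with subadditivity yields $|L_i|\ge 2^{\alpha n} = 2^{\Omega(n)}$. Your version would give the cleaner $s\ge 2^n$, but that is a symptom of the over-strong hypothesis: if the model really did decompose into a pure $Y$-reading upper half and a pure $Z$-reading lower half it would just be an ordered ABP in disguise, and the whole point of the partitioned generalization (and of the $(1-\alpha)$ slack) is to allow the impure side. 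So the gap is not a formality — you need Proposition~\ref{p:2}, and you need to settle for the weaker per-summand bound.
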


The rest of the paper is organized as follows.
In section~\ref{sec:prelims} we describe formally some of the preliminary definitions and notations. In Section~\ref{sec:tool} we define the main parameter of our lower bounds - the polynomial coefficient matrix and prove the required properties with respect to arithmetic operations. Section~\ref{sec:depth3} presents the lower bound result against depth-$3$ homogeneous circuits for computing iterated matrix multiplication. In section~\ref{sec:lowrank}, we present an exponential lower bound for $\Sigma\Pi\Sigma$ circuits of bounded product dimension.  In section~\ref{sec:product-sparse}, we present super-polynomial lower bounds on preprocessed product-sparse formulas. In section~\ref{sec:partabp} we prove exponential lower bounds on partitioned arithmetic branching programs.

\section{Preliminaries}
\label{sec:prelims}
In this section, we formally define the models we study. For more detailed account of the model and the results we refer the reader to the survey~\cite{SY10}. 

 An arithmetic circuit $\Phi$ over the field $\F$ and the set of variables 
 $X = \{x_1, x_2, \dots , x_n \}$ is a directed acyclic graph $G = (V,E)$. 
 The vertices of $G$ with in-degree $0$ are called {\em input} gates and 
 are labelled by variables in $X$ or constants from the field $\F$. The 
 vertices of $G$ with out-degree $0$ are called {\em output} gates. Every 
 internal vertex is either a plus gate or a product gate. 
 We will be working with arithmetic circuits with a single output 
 gate and fan-in of every vertex being at most two. The polynomial computed by the arithmetic circuit is the polynomial associated with the output gate which is defined inductively from the polynomials associated with the nodes feeding into it and the operation at the output gate.
 The size of $\Phi$ is defined to be the number of gates in $\Phi$. For a vertex $v \in V$, we denote the set of variables that occur in the subgraph rooted at $v$ by $X_v$. 

We consider depth restricted circuits. A $\Sigma \Pi \Sigma$ circuit is a levelled depth-$3$ circuit with a plus gate at the top, multiplication gates at the middle level and plus gates at the bottom level. The fan-in of the top plus gate is referred to as top fan-in. A $\Sigma \Pi \Sigma$ circuit is said to be {\em homogeneous} if the plus gate at the bottom level compute homogeneous linear forms only.
 
An important restricted model of arithmetic circuits is multilinear 
circuits. A polynomial $f \in \F[X]$ is called {\em multilinear} if 
the degree of every variable in $f$ is at most one. 
An arithmetic 
circuit is called {\em multilinear} if the polynomial computed 
at every gate is multilinear. 
An arithmetic circuit is called {\em syntactic multilinear} if for every product gate $v$ with children $v_1$ and $v_2$, $X_{v_1} \cap X_{v_2} = \phi$. 

An arithmetic circuit is called an {\em arithmetic formula} 
 if the underlying undirected graph is acyclic i.e. fan-out of every vertex 
 is one. 
 An arithmetic circuit is called {\em skew} if for every product gate, 
 at least one of its children is an input gate. A circuit is called 
 {\em weakly skew} if for every product gate, at least one of its 
 incoming edges is a cut-edge. 

 Let $\Phi$ be a formula defined over the set of variables $X$ and a field 
 $\F$. For a product gate $v$ in $\Phi$ with children $v_1$ and $v_2$, let 
 us define the following properties:
 \begin{description}
 \item[Disjoint] $v$ is said to be {\em disjoint} if $X_{v_1} 
 \cap X_{v_2} = \phi$.
 \item[Sparse] $v$ is said to be {\em $s$-sparse} if the number of 
 monomials in the polynomial computed by at least one of its input gates 
 is at most $2^s$.
 \end{description}
 
 Also, for a node $v$ in $\Phi$, let us define the product-sparse depth of $v$ 
 to be equal to the maximum number of non-disjoint product gates in any path from 
 a leaf to $v$.
 \begin{definition}
 A formula is said to be a $(s, d)$-product-sparse if every product gate $v$ is either disjoint or $s$-sparse,
 where $d$ is the product-sparse depth of the root node.
 \end{definition}
 
 Clearly, any syntactic multilinear formula is a $(s,0)$-product-sparse 
 formula for any $s$. Also, a skew formula is a $(0,d)$-product-sparse 
 formula where $d$ is at most the height of the formula. Thus, proving 
 lower bounds for product-sparse formulas will be a strengthening of 
 known results.
 We also define an extension of the above class of formulas. 
 \begin{definition} 
 A preprocessed product-sparse formula is a product-sparse
 formula in which each input gate which is labelled by an input 
 variable (say, $x_i$) is replaced by a gate labelled by a non-constant 
 univariate polynomial $T(x_i)$ in the same variable. The size of a 
 preprocessed product-sparse formula is defined to be the size 
 of underlying product-sparse formula.
 \end{definition}

 

An arithmetic branching program (ABP) $B$ over a field $\F$ and a set of variables $X$ is defined as a $4$-tuple $(G, w, s, t)$ where $G = (V,E)$ is a directed acyclic graph in which $V$ can be partitioned into levels $L_0, L_1, \dots, L_d$ such that $L_0 = \{s\}$ and $L_d = \{t\}$. 
The edges in $E$ can only go between two consecutive levels. The weight function $w : E \rightarrow X \cup \F$ assigns variables or constants from the field to the edges of $G$. For a path $p$ in $G$, we extend the weight function by $w(p) = \prod_{e \in p}w(e)$. For any $i, j \in V$, let us denote by $P_{i,j}$ the collection of all paths from $i$ to $j$ in $G$. Every vertex $v$ in $B$ computes a polynomial which is given by $\sum_{p \in P_{s,v}} w(p)$. The polynomial $f$ computed by $B$ is defined to be the polynomial computed at the sink $t$ i.e. $f = \sum_{p \in P_{s,t}} w(p)$. The size of $B$ is defined to be $|V|$. The depth of $B$ is defined to be $d$.
  
 For any $i,j \in V$, let us denote by $X_{i,j}$ the set of variables that 
 occur in paths $P_{i,j}$ and denote by $f_{i,j}$ the polynomial 
 $\sum_{p \in P_{i,j}} w(p)$. A homogeneous arithmetic branching program is an ABP $B$ in which the weight function $w$ assigns linear homogeneous forms to the edges of $B$. Clearly, the degree of the homogeneous polynomial computed by $B$ is equal to the depth of $B$.

%

 \begin{definition}
 Let $B = (G,w,s,t)$ be a homogeneous ABP over a field $\F$ and set of variables $X = \{
 x_1, x_2, \dots, x_{2n} \}$. $B$ is said to be $\pi$-partitioned for a 
 permutation $\pi : [2n] \rightarrow [2n]$ if 
 there exists an $i = 2 \alpha n$ for some constant $\alpha$ such 
 that the following condition is satisfied, $\forall v \in L_i: $
 \begin{itemize}
 \item Either, $X_{s,v} \subseteq \{x_{\pi(1)}, x_{\pi(2)}, \dots, x_{\pi(n)} \}$
 and $|X_{v,t}| \le 2n(1 - \alpha)$.
 \item Or, $X_{v,t} \subseteq \{x_{\pi(n+1)}, 
 x_{\pi(n+2)}, \dots, x_{\pi(2n)} \}$ and $|X_{s,v}| \le 2n(1- \alpha)$

 \end{itemize}
 We say that $B$ is partitioned with respect to the level $L_i$.
 $B$ is said to be a partitioned ABP if it is $\pi$-partitioned for some 
 $\pi : [2n] \rightarrow [2n]$.
 \end{definition}

%

 \section{The Polynomial Coefficient Matrix \& Properties}
 \label{sec:tool}
 In this section, we introduce the main tool used in the paper and prove its properties. Let $Y = \{y_1, y_2, \dots, 
 y_m\}$ and $Z = \{z_1, z_2, \dots, z_m\}$ be two sets of variables. Let $f 
 \in \F[Y, Z]$ be a multilinear polynomial over the field $\F$ and the 
 variables $Y \cup Z$. Define $L_f$ to be the $2^m \times 2^m$ 
 {\em partial derivatives matrix} as follows: for monic 
 multilinear monomials $p \in \F[Y], q \in \F[Z]$, define $L_f(p, q)$ to be the coefficient of the monomial $pq$ in $f$. 
 Let us denote the rank of $L_f$ by $\rank(L_f)$.
 We extend the partial derivatives matrix to non-multilinear polynomials.
 \begin{definition}[Polynomial Coefficient Matrix]
 Let $f \in \F[Y, Z]$ be a polynomial over the field $\F$ 
 and the variables $Y \cup Z$. Define $M_f$ to be the $2^m \times 2^m$ 
 {\em polynomial coefficient matrix} with each entry from the ring $\F[Y,Z]$ defined as follows. For monic multilinear monomials $p$ and $q$ in the set of variables $Y$ and $Z$ respectively, $M_f(p,q) = G$ if and only if $f$ can be uniquely written as $f = pq(G) + Q$, where $G, Q \in \F[Y,Z]$ such that $G$ does not contain any variable other than those present in $p$ and $q$, $Q$ does not have any monomial $m$ which is divisible by $pq$ and which contains only variables that are present in $p$ and $q$.
 \end{definition}
 
\noindent For example, if $f = y_1z_1 + y_1^2z_1 + y_1z_1z_2 + z_1$ then $M_{f}(y_1, z_1) = 1 + y_1$. Observe that we can write, $$f = \sum\limits_{p,q} M_f(p,q)pq ~.$$
 
 Also observe that for a multilinear polynomial $f \in \F[Y,Z]$, the polynomial 
 coefficient matrix $M_f$ is same as the partial derivatives matrix $L_f$.
 For any substitution function $S:Y \cup Z \rightarrow \F$, let us denote by 
 $M_f|_S$ the matrix obtained by substituting each variable to the field 
 element as given by $S$ at each entry in $M_f$. We define {\em max-rank} of 
 $M_f$ as follows:
 \[\maxrank(M_f) = \max\limits_{S:Y \cup Z \rightarrow \F}
   \left\{\rank(M_f|_S) \right\} \]
   
 The following propositions bound the max-rank of the polynomial coefficient 
 matrix.
  
 \begin{proposition}\label{p:2}
 Let $f \in \F[Y,Z]$ be a polynomial over the field $\F$ and the sets of 
 variables $Y' \subseteq Y$ and $Z' \subseteq Z$. Let $a = \min\{|Y'|, |Z'|
 \}$. Then, $\maxrank(M_f) \le 2^a$.
 \end{proposition}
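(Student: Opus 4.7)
The plan is to show that $M_f$ has a large all-zero block, so that the nontrivial portion of the matrix, before and after any substitution, lives in a $2^{|Y'|} \times 2^{|Z'|}$ submatrix whose rank is trivially bounded by $2^a$.

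First, I would fix any row index $p$ that is divisible by some $y \in Y \setminus Y'$ and show that $M_f(p,q) = 0$ for every column index $q$. Unpacking the definition, $M_f(p,q)$ is the unique polynomial $G$ whose variables all appear in $p$ or $q$, satisfying $f = pq \cdot G + Q$ with $Q$ having no monomial divisible by $pq$ that uses only variables appearing in $p$ and $q$. Since $f$ uses no variable outside $Y' \cup Z'$, and $y \mid pq$ but $y$ appears in no monomial of $f$, no monomial of $f$ is divisible by $pq$. Hence the decomposition $G = 0$, $Q = f$ trivially satisfies both conditions, and by the uniqueness clause $M_f(p,q) = 0$. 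A symmetric argument handles columns indexed by $q$ divisible by some $z \in Z \setminus Z'$.

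Second, I would observe that evaluating under any substitution $S : Y \cup Z \to \F$ cannot turn a zero polynomial entry into a nonzero field element. Hence $\rank(M_f|_S)$ equals the rank of the submatrix of $M_f|_S$ supported on monic multilinear monomials $p$ in variables of $Y'$ and $q$ in variables of $Z'$. This submatrix has at most $2^{|Y'|}$ rows and $2^{|Z'|}$ columns, so $\rank(M_f|_S) \le \min(2^{|Y'|}, 2^{|Z'|}) = 2^{a}$. Taking the maximum over all substitutions $S$ yields $\maxrank(M_f) \le 2^a$.

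The argument is essentially bookkeeping against the definition, so I do not anticipate any substantive obstacle. The one subtlety worth highlighting is the uniqueness clause in the definition of $M_f(p,q)$: because $G$ is required to contain only variables present in $p$ and $q$, the only candidate forced by the absence of $pq$-divisible monomials in $f$ is $G = 0$, rather than some nonzero polynomial involving the ``missing'' variable $y$.
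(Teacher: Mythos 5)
Your proof is correct and takes essentially the same approach as the paper's, which compresses the argument into a single observation that the number of non-zero rows or columns of $M_f$ is at most $2^a$; you have simply unpacked the bookkeeping (why those entries vanish, and why substitution cannot revive them) in full detail.
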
 
 \begin{proof}
 In the polynomial coefficient matrix $M_f$, the number of non-zero rows or 
 non-zero columns will be at most $2^a$. Thus, rank of $M_f$ for any 
 substitution would be at most $2^a$. Hence, $\maxrank(M_f) \le 2^a$.
 \end{proof}

 
 \begin{proposition}\label{p:3}
 Let $f, g \in \F[Y,Z]$ be two polynomials. Then,
 \[ \maxrank(M_{f+g}) \le \maxrank(M_f) + \maxrank(M_g). \]
 \end{proposition}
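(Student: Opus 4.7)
The plan is to reduce the statement to the subadditivity of matrix rank by first showing that the polynomial coefficient matrix is itself linear in the polynomial, i.e.\ $M_{f+g} = M_f + M_g$ as matrices over $\F[Y,Z]$.

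First I would verify this linearity. By the definition, the decomposition $f = \sum_{p,q} M_f(p,q)\,pq$ is unique: the entry $M_f(p,q)$ is exactly the polynomial $G$ (supported only on the variables appearing in $pq$) such that $pq\cdot G$ contributes the full $pq$-divisible, $\mathrm{var}(pq)$-only part of $f$. Since this extraction operation is $\F$-linear in $f$ (adding polynomials adds coefficients monomial-by-monomial, and the restriction to monomials divisible by $pq$ using only variables of $pq$ is a linear projection), we get $M_{f+g}(p,q) = M_f(p,q) + M_g(p,q)$ for every $(p,q)$, and hence $M_{f+g} = M_f + M_g$ entrywise.

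Next, I would fix an arbitrary substitution $S : Y \cup Z \to \F$. Since substitution is a ring homomorphism $\F[Y,Z]\to\F$ applied entrywise, it commutes with addition of matrices, so
\[
M_{f+g}|_S \;=\; M_f|_S + M_g|_S.
\]
By the standard subadditivity of rank for matrices over a field,
\[
\rank(M_{f+g}|_S) \;\le\; \rank(M_f|_S) + \rank(M_g|_S) \;\le\; \maxrank(M_f) + \maxrank(M_g).
\]
Taking the maximum over all substitutions $S$ on the left-hand side then yields the desired inequality.

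I do not anticipate a real obstacle here; the only mildly delicate point is justifying the additivity $M_{f+g} = M_f + M_g$ from the (somewhat implicit) uniqueness clause in the definition of the polynomial coefficient matrix. Once that bookkeeping is in place, the proof is an immediate appeal to $\rank(A+B)\le\rank(A)+\rank(B)$ followed by taking a maximum.
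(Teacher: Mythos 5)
Your proof is correct and follows essentially the same route as the paper's: establish $M_{f+g}=M_f+M_g$, fix a substitution, and invoke subadditivity of matrix rank before taking the maximum. The only difference is that you spell out the linearity of $f\mapsto M_f$ in more detail, which the paper simply declares ``easy to observe.''
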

 \begin{proof}
 It is easy to observe that $M_{f+g} = M_f + M_g$. Let $\maxrank(M_{f+g}) = 
 \rank(M_{f+g}|_S)$ for some substitution $S$. Then,
 \begin{eqnarray*}
 \maxrank(M_{f+g}) &=& \rank(M_{f+g}|_S) \\
 &=& \rank(M_f|_S + M_g|_S) \\
 &\le & \rank(M_f|_S) + \rank(M_g|_S) \\
 &\le & \maxrank(M_f) + \maxrank(M_g).
 \end{eqnarray*}
 \end{proof}


 \begin{proposition}\label{p:4}
 Let $Y_1, Y_2 \subseteq Y$ and $Z_1, Z_2 \subseteq Z$ such that $Y_1 \cap 
 Y_2 = \phi$ and $Z_1 \cap Z_2 = \phi$. Let $f \in \F[Y_1, Z_1]$ and $g 
 \in \F[Y_2, Z_2]$. Then,
 \[ \maxrank(M_{fg}) = \maxrank(M_f) \cdot \maxrank(M_g). \]
 \end{proposition}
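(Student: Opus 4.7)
The plan is to show that the polynomial coefficient matrix of a product, restricted to its support, is exactly the Kronecker product of the coefficient matrices of the factors, and then use the fact that ranks multiply under Kronecker products. This reduces the identity on maxranks to the corresponding identity for ordinary ranks under any fixed substitution.

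The first step is a combinatorial observation about monic multilinear monomials. Since $Y_1 \cap Y_2 = \phi$, every monic multilinear monomial $P$ over $Y_1 \cup Y_2$ has a unique decomposition $P = p_1 p_2$ with $p_1$ supported in $Y_1$ and $p_2$ in $Y_2$; the same holds for columns, decomposing $Q = q_1 q_2$ over $Z_1 \cup Z_2$. Using the expansions $f = \sum_{p_1, q_1} M_f(p_1, q_1)\, p_1 q_1$ and $g = \sum_{p_2, q_2} M_g(p_2, q_2)\, p_2 q_2$, I would multiply these out to obtain
\[
 fg \;=\; \sum_{p_1, p_2, q_1, q_2} M_f(p_1, q_1)\, M_g(p_2, q_2)\, (p_1 p_2)(q_1 q_2).
\]
The bookkeeping step I need to verify is that the coefficient $M_f(p_1, q_1) M_g(p_2, q_2)$ is indeed the entry $M_{fg}(p_1 p_2,\, q_1 q_2)$ in the sense of the definition: its variables lie only inside $\mathrm{vars}(p_1 p_2) \cup \mathrm{vars}(q_1 q_2)$ (clear, by the variable restrictions on $M_f$ and $M_g$ and the disjointness of $Y_1, Y_2$ and $Z_1, Z_2$), and the residual part of $fg$ contains no monomial divisible by $p_1 p_2 q_1 q_2$ using only variables from these monomials (which follows from the analogous residual properties for $f$ and $g$ together with disjointness).

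Once this entrywise identity is established, $M_{fg}$ is nonzero only on rows/columns indexed by monomials supported in $Y_1 \cup Y_2$ and $Z_1 \cup Z_2$ respectively, and on that support it equals the Kronecker product $M_f \otimes M_g$ (after reindexing via the unique decomposition). For any substitution $S : Y \cup Z \to \F$, writing $S_1, S_2$ for its restrictions to $Y_1 \cup Z_1$ and $Y_2 \cup Z_2$, this gives $M_{fg}|_S = M_f|_{S_1} \otimes M_g|_{S_2}$, and hence $\rank(M_{fg}|_S) = \rank(M_f|_{S_1}) \cdot \rank(M_g|_{S_2})$ by the standard rank formula for Kronecker products.

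From here both inequalities drop out. For the upper bound, for every $S$ the right-hand side is at most $\maxrank(M_f) \cdot \maxrank(M_g)$, so $\maxrank(M_{fg}) \le \maxrank(M_f) \cdot \maxrank(M_g)$. For the matching lower bound, I would pick $S_1^\star, S_2^\star$ achieving $\maxrank(M_f)$ and $\maxrank(M_g)$ respectively (this is possible because $S_1$ and $S_2$ act on disjoint variable sets), combine them into a single $S^\star$ on $Y \cup Z$ (extending arbitrarily on $Y \setminus (Y_1 \cup Y_2)$ and $Z \setminus (Z_1 \cup Z_2)$, which do not affect $M_{fg}$), and obtain $\rank(M_{fg}|_{S^\star}) = \maxrank(M_f) \cdot \maxrank(M_g)$. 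The only real obstacle is the entrywise verification in the first step; everything else is a straightforward appeal to Kronecker product properties.
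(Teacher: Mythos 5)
Your proof is correct and follows essentially the same route as the paper: establish $M_{fg} = M_f \otimes M_g$ on the relevant support, then combine the Kronecker-product rank identity with a substitution argument. The only difference is in presentation: you spell out the entrywise verification and the lower-bound step (choosing $S_1^\star, S_2^\star$ independently and gluing, which works because the factors live on disjoint variable sets), which the paper compresses into "Similarly."
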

 \begin{proof}
 We think of $M_f$ as a $2^{|Y_1|} \times 2^{|Z_1|}$ matrix and $M_g$ as a 
 $2^{|Y_2|} \times 2^{|Z_2|}$ matrix as all the other entries are zero. 
 Similarly, we can think of $M_{fg}$ as a $2^{|Y_1 \cup Y_2|} \times 
 2^{|Z_1 \cup Z_2|}$ matrix. Since $f$ and $g$ are defined over disjoint 
 set of variables, we have $M_{fg} = M_f \otimes M_g$ where $\otimes$ 
 denotes the tensor product of two matrices.
 Let $\maxrank(M_{fg}) = \rank(M_{fg}|_S)$ for some substitution $S$. Then,
 \[ \begin{array}{rclcl}
 \maxrank(M_{fg}) &=& \rank(M_{fg}|_S) \\
 &=& \rank((M_f \otimes M_g)|_S) &\le & \rank(M_f|_S \otimes M_g|_S) \\
 &\le & \rank(M_f|_S) \cdot \rank(M_g|_S) &\le & \maxrank(M_f) \cdot \maxrank(M_g).
 \end{array} \]
 Similarly, $\maxrank(M_{fg}) \ge \maxrank(M_f) \cdot \maxrank(M_g)$.
 \end{proof}
 
 \begin{proposition}\label{p:5}
 Let $f \in \F[Y,Z]$ and $g \in \F[Y]$ or $g \in \F[Z]$. Then,
 $\maxrank(M_{fg}) \le \maxrank(M_f)$.
 \end{proposition}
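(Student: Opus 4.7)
The plan is to exploit the asymmetry that $g$ involves variables on only one side of the partition. I would assume without loss of generality that $g \in \F[Y]$, since the case $g \in \F[Z]$ is symmetric (by transposition of the matrix). The key claim is that for every substitution $S$, each row of $M_{fg}|_S$ is an $\F$-linear combination of rows of $M_f|_S$. This implies the row space of $M_{fg}|_S$ sits inside that of $M_f|_S$, whence $\rank(M_{fg}|_S) \le \rank(M_f|_S)$. Taking a maximum over $S$ then gives the required bound.

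To establish the row-dependence claim, I would write both $f$ and $g$ as sums of monomials and expand $fg$ term by term. Because $g$ has no $Z$-variables, the squarefree $Z$-part of a monomial $m_f m_g$ coincides with that of $m_f$; only the $Y$-part is affected by the multiplication. Collecting terms by squarefree parts and then substituting via $S$, I would verify an identity of the shape
\[ M_{fg}|_S(p', q) \;=\; \sum_{p_1} \alpha(p_1, p') \cdot M_f|_S(p_1, q), \]
where the scalar $\alpha(p_1, p')$ depends on $p_1, p', g$ and $S$ but, crucially, not on the column index $q$. This $q$-independence is the whole point: it is exactly what turns the transition from $M_f|_S$ to $M_{fg}|_S$ into a set of row operations, and it holds precisely because $g$ is confined to one side of the partition so the ``coupling'' between $g$ and $f$ lives only on the $Y$-side.

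The main obstacle is a piece of bookkeeping when one actually writes down $\alpha(p_1, p')$ explicitly. The product $p \cdot p_1$ of two squarefree $Y$-monomials is itself not squarefree when their supports intersect, so I would factor $p \cdot p_1 = c(p, p_1) \cdot p'$ where $p' = \mathrm{sqfree}(p p_1)$ is the ``union'' squarefree monomial and $c(p, p_1)$ collects the common variables. Writing $g = \sum_{p} N_g(p) \cdot p$ via a $Y$-only version of the same decomposition used to define $M_f$, one obtains $\alpha(p_1, p') = \sum_{p :\, \mathrm{sqfree}(p p_1) = p'} N_g(p)|_S \cdot c(p, p_1)|_S$, and $q$-independence is then immediate from the formula. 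Once this combinatorial step is handled carefully, the remainder is standard linear algebra.
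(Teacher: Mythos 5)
Your proposal is correct and takes essentially the same approach as the paper's: both reduce to showing that, for every substitution $S$, each row of $M_{fg}|_S$ lies in the row span of $M_f|_S$, which follows because multiplication by $g \in \F[Y]$ only reshuffles/scales the $Y$-side and leaves the column index untouched. The paper organizes the bookkeeping by first handling a single squarefree monomial $g = y^S$ and then summing over monomials of $g$, whereas you write the linear-combination coefficient $\alpha(p_1,p')$ in one shot for general $g$, but the underlying computation is the same.
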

 \begin{proof}
 Without loss of generality, we assume that $g \in \F[Y]$. The case 
 when $g \in \F[Z]$ will follow similarly. 
 For a subset $S \subseteq Y$, we denote the monomial $\Pi_{y \in S}
 y$ by $y^S$. Let us analyze the case when $g = y^{S}$. Consider a row of $M_{fg}$ indexed by the multilinear monomial $p$ in 
 the set of variables $Y$. If $p$ is not divisible by $y^S$, then all 
 the entries in this row will be zero. Otherwise, for any multilinear 
 monomial $q$ in the variables $Z$, we can write,
 $ M_{y^Sf} = \sum_{S' \subseteq S} y^{S \backslash S'} M_f(p/y^{S'}, 
    q) ~.$
 Thus, rows in $M_{y^Sf}$ are a linear combination of rows in 
 $M_f$. Similarly, we can show that for any monomial $m$ in the 
 variables $Y$, rows in $M_{mf}$ are a linear combination of rows in 
 $M_f$.
 
 Now consider any $g = \sum_{i \in [r]} m_i \in \F[Y]$ where $r$ is the 
 number of monomials in $g$ and each $m_i$ is a distinct monomial. Thus, 
 $M_{fg} = \sum_{i \in [r]} M_{m_if}$. Thus, each row in $M_{fg}$ is a 
 linear combination of rows in $M_f$. Hence, $\maxrank(M_{fg}) \le 
 \maxrank(M_f)$.
 \end{proof}
 
 \begin{corollary} \label{c:1}
 Let $f, g \in \F[Y,Z]$. If $g$ is a linear form, then $\maxrank(M_{fg}) \le 2 \cdot \maxrank(M_f)$.
 \end{corollary}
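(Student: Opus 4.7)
The plan is to decompose the linear form $g$ according to the partition of variables and then combine the earlier propositions. Specifically, any linear form $g \in \F[Y,Z]$ can be written as $g = g_Y + g_Z$, where $g_Y \in \F[Y]$ collects the constant term together with all monomials in the $Y$-variables, and $g_Z \in \F[Z]$ collects the (homogeneous) linear part in the $Z$-variables. Then $fg = f g_Y + f g_Z$.

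Next, I would apply Proposition~\ref{p:3} (subadditivity of maxrank under sums) to obtain
\[
\maxrank(M_{fg}) \;\le\; \maxrank(M_{fg_Y}) + \maxrank(M_{fg_Z}).
\]
For each of these two terms, since $g_Y$ is a polynomial in the variables of $Y$ only and $g_Z$ is a polynomial in the variables of $Z$ only, Proposition~\ref{p:5} applies directly and gives $\maxrank(M_{fg_Y}) \le \maxrank(M_f)$ and $\maxrank(M_{fg_Z}) \le \maxrank(M_f)$. Adding these yields the claimed bound $\maxrank(M_{fg}) \le 2 \maxrank(M_f)$.

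The only thing that needs a moment of care is the handling of the constant term of $g$ (if present): it can be absorbed into either $g_Y$ or $g_Z$, since a constant lies in both $\F[Y]$ and $\F[Z]$ and Proposition~\ref{p:5} applies either way. There is no real obstacle here — the entire content is in the clean decomposition of a linear form across the partition $(Y,Z)$, and the two earlier propositions do the rest.
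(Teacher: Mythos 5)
Your proposal is correct and follows essentially the same route as the paper: decompose the linear form as $g = g_Y + g_Z$ with $g_Y \in \F[Y]$ and $g_Z \in \F[Z]$, then invoke Proposition~\ref{p:3} (subadditivity) and Proposition~\ref{p:5}. The paper states this tersely; your version merely spells out the two applications and helpfully notes that the constant term can be absorbed into either piece.
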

 \begin{proof}
 Since $g$ is a linear form in the variables $Y \cup Z$, $g$ can 
 be expressed as $g = g_1 + g_2$ where $g_1 \in \F[Y]$ and $g_2 \in 
 \F[Z]$ and the proof follows.
 \end{proof}
 
 \begin{corollary} \label{c:2}
 Let $f, g \in \F[Y,Z]$. If $g$ can be expressed as 
 $\sum\limits_{i \in [r]} g_i h_i$ where $g_i \in \F[Y]$ and $h_i 
 \in \F[Z]$, then $\maxrank(M_{fg}) \le r \cdot \maxrank(M_{f})$.
 \end{corollary}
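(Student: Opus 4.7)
The plan is to combine Proposition~\ref{p:3} (subadditivity of max-rank under sums) with Proposition~\ref{p:5} (multiplication by a polynomial in only one of the two variable sets does not increase max-rank). Since we are given the decomposition $g = \sum_{i \in [r]} g_i h_i$ with $g_i \in \F[Y]$ and $h_i \in \F[Z]$, I would begin by distributing the product: $fg = \sum_{i \in [r]} f g_i h_i$.

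Next, I would apply Proposition~\ref{p:3} inductively (or just as a subadditivity bound) to obtain
\[ \maxrank(M_{fg}) \;\le\; \sum_{i \in [r]} \maxrank(M_{f g_i h_i}). \]
Then for each fixed $i$, I would bound a single summand by applying Proposition~\ref{p:5} twice. First, since $g_i \in \F[Y]$, Proposition~\ref{p:5} gives $\maxrank(M_{f g_i}) \le \maxrank(M_f)$. Second, viewing $f g_i$ as a polynomial in $\F[Y,Z]$ and noting that $h_i \in \F[Z]$, another application of Proposition~\ref{p:5} yields $\maxrank(M_{(f g_i) h_i}) \le \maxrank(M_{f g_i}) \le \maxrank(M_f)$. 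Substituting into the subadditivity bound gives $\maxrank(M_{fg}) \le r \cdot \maxrank(M_f)$.

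There is essentially no obstacle here; the corollary is a direct composition of the two previous results. The only minor thing to verify is that Proposition~\ref{p:5} really can be chained in this way, i.e., that multiplying a polynomial in $\F[Y,Z]$ by something in $\F[Y]$ and then by something in $\F[Z]$ is covered by two separate invocations of that proposition. This is immediate from the statement of Proposition~\ref{p:5}, which places no restriction on $f$ other than $f \in \F[Y,Z]$.
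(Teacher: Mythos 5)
Your proof is correct and follows essentially the same route as the paper: decompose $fg=\sum_i f g_i h_i$, apply subadditivity (Proposition~\ref{p:3}), and bound each summand by two applications of Proposition~\ref{p:5} (once for $g_i\in\F[Y]$, once for $h_i\in\F[Z]$). The paper's one-line proof leaves the double invocation of Proposition~\ref{p:5} implicit, whereas you spell it out.
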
 
 \begin{proof}
 Since $M_{fg} = \sum\limits_{i \in [r]} M_{fg_ih_i}$, using Proposition 
 \ref{p:5} completes the proof.
 \end{proof}
 
 \begin{corollary} \label{c:3}
 Let $f, g \in \F[Y,Z]$. If $g$ has $r$ monomials, then 
 	$\maxrank(M_{fg}) \le r \cdot \maxrank(M_{f})$
 \end{corollary}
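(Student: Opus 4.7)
The plan is to reduce Corollary~\ref{c:3} directly to Corollary~\ref{c:2}. The key observation is that every monomial in the polynomial ring $\F[Y,Z]$ splits cleanly into a $Y$-part and a $Z$-part, since $Y$ and $Z$ are disjoint sets of variables. Thus if $g = \sum_{i=1}^{r} m_i$ where each $m_i$ is a distinct monomial in the variables $Y \cup Z$, then I can factor each $m_i = g_i h_i$ where $g_i$ is the product of the powers of $Y$-variables appearing in $m_i$ (so $g_i \in \F[Y]$) and $h_i$ is the product of the powers of $Z$-variables appearing in $m_i$ (so $h_i \in \F[Z]$).

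With this factorization in hand, the polynomial $g$ is exactly of the form $\sum_{i=1}^{r} g_i h_i$ assumed in the hypothesis of Corollary~\ref{c:2}, so that corollary applies verbatim and yields $\maxrank(M_{fg}) \le r \cdot \maxrank(M_f)$.

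I do not foresee any obstacle here: the proof is essentially a one-line invocation of Corollary~\ref{c:2} once one points out the trivial separation of variables in each monomial. The only thing worth being careful about is the coefficients: if one prefers to write $g = \sum_i c_i m_i$ with $c_i \in \F$, then one absorbs $c_i$ into either $g_i$ or $h_i$ (say, into $g_i$), which keeps $g_i \in \F[Y]$, $h_i \in \F[Z]$, and the number of summands equal to $r$. The bound then follows immediately.
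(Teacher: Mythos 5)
Your proof is correct and matches the paper's argument exactly: both split each monomial of $g$ into its $Y$-part and $Z$-part and then invoke Corollary~\ref{c:2}. Your extra remark about absorbing scalar coefficients into $g_i$ is a harmless clarification the paper leaves implicit.
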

 \begin{proof}
 Each monomial $m_i$ of $g$ can be written as $g_ih_i$ such that 
 $g_i$ is a monomial in the variables $Y$ and $h_i$ is a monomial 
 in the variables $Z$. Thus, the proof follows using above corollary.
 \end{proof}
 
 \subsection*{Full Rank Polynomials}
 Let $X = \{x_1, \cdots, x_{2n} \}, Y = \{y_1, \cdots, y_n\}$ and $Z = \{z_1, \cdots, z_n\}$ 
 be sets of variables and $f \in \F[X]$. $f$ is said to be a {\em full rank} polynomial if 
 for any partition $A:X \rightarrow Y \cup Z$, $\rank(L_{f^A}) = 2^n$, where $f^A$ is the polynomial 
 obtained from $f$ after substituting every variable $x$ by $A(x)$.
 We say that $f$ is a {\em full max-rank} polynomial if $\maxrank(M_{f^A}) = 2^n$ for any partition $A$. 
 Observe that any full rank polynomial is also a full max rank polynomial. Further more, many full rank 
 polynomials have been studied in the literature~\cite{J08, Raz06, Raz09}.

\section{Lower Bounds against Homogeneous Depth-3 Circuits}
\label{sec:depth3}
 
 We recall the definition of homogeneous $\Sigma \Pi \Sigma$ circuits from section~\ref{sec:prelims}. The polynomial computed by a $\Sigma \Pi \Sigma$ circuit with top fan-in $k$ 
 can be represented as $\sum\limits_{i = 1}^{k} P_i$, 
 where $P_i = \prod\limits_{j = 1}^{deg(P_i)}l_{i,j}$, each $l_{i,j}$ is a 
 linear from and $deg(P_i)$ is the fan-in of the $i^{th}$ 
 multiplication gate at the middle level.
 
 Let $\Phi$ be a homogeneous $\Sigma \Pi \Sigma$ circuit defined over the 
 set of variables $X$ and over a field $\F$ computing a homogeneous polynomial $f$. Let us denote 
 the polynomial coefficient matrix of the polynomial computed at the top plus 
 gate of $\Phi$ by $M_{\Phi}$. For a partition $A : X \rightarrow Y \cup Z$, 
 let us denote by $\Phi^A$ the circuit obtained after replacing every variable 
 $x$ by $A(x)$. We prove the following upper bound on the $\maxrank(M_{\Phi^A})$.
 
 \begin{lemma} \label{l:7}
 Let $\Phi$ be a homogeneous $\Sigma \Pi \Sigma$ circuit as defined above. Let the degree of $f$ be equal to $d$. Then, for any partition $A : X \rightarrow  Y \cup Z$, $\maxrank(M_{\Phi^A}) \le k \cdot 2^d$.
 \end{lemma}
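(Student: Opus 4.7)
The plan is to combine the two key arithmetic properties of the maxrank already established in Section~\ref{sec:tool}: the subadditivity under sums (Proposition~\ref{p:3}) and the doubling behavior under multiplication by a linear form (Corollary~\ref{c:1}). Because $\Phi$ is a homogeneous $\Sigma\Pi\Sigma$ circuit computing a degree-$d$ polynomial, each middle product gate $P_i$ is a product of at most $d$ homogeneous linear forms $l_{i,1},\ldots,l_{i,\deg(P_i)}$ with $\deg(P_i)\le d$ (the bottom gates compute homogeneous linear forms, so each $P_i$ is a homogeneous polynomial, and only degree-$d$ terms can survive in the output).

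First I would apply the variable substitution induced by $A$: the substituted circuit $\Phi^A$ has the same structure, and each $l_{i,j}^A$ is still a homogeneous linear form, but now in the variables $Y\cup Z$. The crucial observation is that such a linear form $\ell$ can be written as $\ell = \ell_Y+\ell_Z$ with $\ell_Y\in\F[Y]$ and $\ell_Z\in\F[Z]$, which is exactly the hypothesis needed to invoke Corollary~\ref{c:1}.

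Next, for a fixed $i$, I would bound $\maxrank(M_{P_i^A})$ by induction on the number of linear-form factors. The base case is the constant polynomial $1$, whose polynomial coefficient matrix has maxrank $1$. Each time we multiply by one more linear form $l_{i,j}^A$, Corollary~\ref{c:1} at most doubles the maxrank. After $\deg(P_i)\le d$ applications we obtain $\maxrank(M_{P_i^A}) \le 2^{\deg(P_i)} \le 2^d$. Then I would apply Proposition~\ref{p:3} inductively across the top-level sum of $k$ gates:
\[
\maxrank(M_{\Phi^A}) \;=\; \maxrank\!\left(M_{\sum_{i=1}^k P_i^A}\right) \;\le\; \sum_{i=1}^k \maxrank(M_{P_i^A}) \;\le\; k\cdot 2^d.
\]

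There is no real obstacle here, since the lemma is essentially a direct composition of the arithmetic rules already proved. The only subtle point to handle carefully is the justification that every product gate contributes a homogeneous polynomial of degree exactly $d$ (or, more conservatively, of degree at most $d$), so that the factor $2^{\deg(P_i)}$ is uniformly bounded by $2^d$; this uses the homogeneity hypothesis, which prevents the circuit from producing cancellations between product gates of different degrees that could otherwise inflate the intermediate maxrank beyond what appears in the final output.
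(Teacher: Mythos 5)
Your proposal is correct and matches the paper's argument essentially step for step: decompose $f$ into the sum of the $k$ product gates, observe that homogeneity lets you assume each surviving $P_i$ has degree exactly $d$, bound $\maxrank(M_{P_i^A})\le 2^d$ by iterating Corollary~\ref{c:1} over the $d$ homogeneous linear factors, and sum via Proposition~\ref{p:3}. The only cosmetic difference is that you phrase the per-gate bound as an explicit induction starting from the constant $1$, whereas the paper starts from $\maxrank(M_{l_{i,j}^A})\le 2$ directly; both give the same $2^d$.
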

 \begin{proof}
 From the definition it is clear that $f$ can be written as: $f = \sum\limits_{i = 1}^k P_i$ 
 where $P_i = \prod\limits_{j = 1}^{deg(P_i)} l_{i,j}$, each $l_{i,j}$ is a homogeneous linear form. 
 Let us denote by $l_{i,j}^A$ and $P_i^A$ the polynomials obtained after substitution 
 of $x$ by $A(x)$ in the polynomials $l_{i,j}$ and $P_i$ respectively.
 
 Since each $l_{i,j}$ is a homogeneous linear form, a multiplication gate $P_i$ 
 computes a homogeneous polynomial of degree $deg(P_i)$. Thus if $deg(P_i) \neq d$ then 
 the multiplication gate $P_i$ does not contribute any monomial in the output 
 polynomial $f$. Hence, it can be assumed without loss of generality that $deg(P_i) = d$ 
 for all $i \in [k]$.

 Since $l_{i,j}$ is a homogeneous linear form, $\maxrank(M_{l_{i,j}^A}) \le 2$. Thus, 
 using Corollary \ref{c:1},$\forall i \in [k] : \maxrank(M_{P_i^A}) \le 2^d$.
 Hence, using Proposition \ref{p:3},
 $\maxrank(M_{f^A}) \le \sum_{i \in [k]} \maxrank(M_{P_i^A}) \le k \cdot 2^d$.
 \end{proof}
 
 In \cite{NW95}, it was proved that any homogeneous $\Sigma \Pi \Sigma$ circuit for multiplying 
 $d$ $n \times n$ matrices requires $\Omega(n^{d-1}/d!)$ size. We prove a better lower 
 bound using our techniques. To consider a single 
 output polynomial, we will concentrate on the $(1,1)^{th}$ entry of the product.
%
 Formally, let $X^1, X^2, \dots, X^d$ be disjoint sets of variables of size $n^2$ each, with $X = 
 \cup_{i \in [d]}X^i$. The variables in $X^i$ will be denoted by $x^i_{jk}$ for $j, k 
 \in [n]$. We will be looking at the problem of multiplying $d$ $n \times n$ matrices 
 $A^1, A^2, \dots, A^d$ where $(j,k)^{th}$ entry of matrix $A^i$, denoted by $A^i_{jk}$, 
 is defined to be equal $x^i_{jk}$ for all $i \in [d]$ and $j, k \in 
 [n]$. The output polynomial, that we are interested in, is the $(1,1)^{th}$ entry of 
 $\prod_{i \in [d]}A^i$ denoted by $f$. $f$ is clearly a homogeneous multilinear 
 polynomial of degree $d$. Moreover, any monomial in $f$ contains one variable each from the 
 sets $X^1, X^2, \dots, X^d$.
 
  We first prove an important lemma below. We also provide an alternative induction based 
  proof for the below lemma in the Appendix \ref{app:l-8}.
 
 \begin{lemma} \label{l:8}
 For the polynomial $f$ as defined above, there exists a bijective partition 
 $B : X \rightarrow Y \cup Z$ such that $\maxrank(M_{f^B}) = n^{d-1}.$
 \end{lemma}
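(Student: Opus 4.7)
The plan is to exhibit an explicit bijective partition $B$ under which $M_{f^B}$ contains a permutation submatrix of order exactly $n^{d-1}$. Take the \emph{layer-alternating} partition: send every variable in $X^i$ to $Y$ if $i$ is odd, and to $Z$ if $i$ is even. Since the matrix sets $X^1, \ldots, X^d$ are disjoint, this is a bijection of $X$ onto $Y \cup Z$.

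Because $f$ is multilinear of degree $d$ and $B$ is merely a relabeling of variables, $f^B$ is again multilinear of degree $d$. Unwinding the definition of the polynomial coefficient matrix in this multilinear setting: for monic multilinear monomials $p \in \F[Y]$ and $q \in \F[Z]$, the only monomial supported on the variables of $pq$ and divisible by $pq$ is $pq$ itself (any additional variable would break multilinearity), so $M_{f^B}(p,q)$ is simply the coefficient of $pq$ in $f^B$. Hence $M_{f^B}$ is already a $0/1$-valued constant matrix---no substitution is needed---and $\maxrank(M_{f^B}) = \rank(M_{f^B})$. Moreover each of the $n^{d-1}$ monomials of $f$ contributes a single $1$ to a single entry, so $M_{f^B}$ has at most $n^{d-1}$ nonzero entries, immediately giving $\rank(M_{f^B}) \le n^{d-1}$.

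The main step, and the only nontrivial one, is the injectivity check that yields the matching lower bound. Each monomial of $f$ has the form
\[ m_{j_1, \ldots, j_{d-1}} = x^1_{1, j_1}\, x^2_{j_1, j_2}\, x^3_{j_2, j_3} \cdots x^d_{j_{d-1}, 1}, \qquad (j_1, \ldots, j_{d-1}) \in [n]^{d-1}, \]
which under $B$ splits as $p \cdot q$, with $p$ the product of the odd-layer factors and $q$ the product of the even-layer factors. I would argue that $(j_1, \ldots, j_{d-1}) \mapsto p$ is injective by reading the indices directly off $p$: because the variable sets $X^i$ are disjoint, the odd-layer factors of $p$ are identifiable individually, and their index pairs collectively cover every entry of the tuple (with a brief case-split on the parity of $d$ to handle whether the last odd-layer factor has a free second index or terminates at the boundary $1$). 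A symmetric reading of $q$ shows $(j_1, \ldots, j_{d-1}) \mapsto q$ is also injective.

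Injectivity in both coordinates places the $n^{d-1}$ ones of $M_{f^B}$ in pairwise distinct rows and pairwise distinct columns; restricting to those rows and columns yields a permutation matrix of order $n^{d-1}$, so $\rank(M_{f^B}) \ge n^{d-1}$. Combined with the earlier upper bound this gives $\maxrank(M_{f^B}) = n^{d-1}$. The only slightly delicate part of the argument is the parity case-split inside the injectivity step; beyond that, the proof is a direct combinatorial reading of the iterated matrix product and does not require any further machinery from Section~\ref{sec:tool}.
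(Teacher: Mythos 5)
Your proof is correct and takes essentially the same route as the paper's: both use the layer-alternating (odd-to-$Y$, even-to-$Z$) partition and show that the $n^{d-1}$ IMM monomials occupy pairwise-distinct rows and pairwise-distinct columns of $M_{f^B}$, hence form a permutation submatrix of order $n^{d-1}$. The paper phrases the injectivity via paths in a layered graph while you read the indices directly off the odd- and even-layer factors, but the substance is identical (and your version makes the column-side injectivity slightly more explicit than the paper's terse assertion).
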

 \begin{proof}
 We fix some notations first. For $i < j$, let us denote the set $\{ i, i + 1, \ldots, j \}$ by $[i, j]$. Let us also
 denote the pair $\left( (k, i), (k, j) \right)$ by $e_{ijk}$ for any $i, j, k$.
 Construct a directed graph $G(V, E)$ on the set of vertices $V = [0,d] \times [1,n] $
 and consisting of edges $E = \left\{  e_{ijk} \mid k \in [0, d - 1], i, j \in [1,n] \right\}$.
 Note that the edges $e_{ijk}$ and $e_{jik}$ are two distinct edges for fixed values of $i,j,k$
 when $i \neq j$.
 Let us also define a weight function $w: E \rightarrow X$ such that $w( e_{ijk} ) = x^{k+1}_{ij}$.
 
 It is easy to observe that the above graph encodes the matrices $A_1, A_2, \ldots, A_d$. The weights 
 on the edges are the variables in the matrices. For example, a variable $x^{k+1}_{ij}$ in the matrix 
 $A_{k+1}$ is the weight of the edge $e_{ijk}$. Let us denote the set of paths in $G$ from the 
 vertex $(0,1)$ to the vertex $(d,1)$ by ${\cal P}$. Let us extend the weight function and
 define $w( p ) = \prod_{e \in p}{ w(e) }$ for any $p \in {\cal P}$. Since, all paths in ${\cal P}$
 are of length equal to $d$, the weights corresponding to each of these paths are monomials of 
 degree $d$.
 
 Let us define the partition $B : X \rightarrow Y \cup Z$ as follows: all the variables in odd 
 numbered matrices are assigned variables in $Y$ and all the variables in even numbered 
 matrices are assigned variables in $Z$. Let us denote the variable assigned by $B$ to 
 $x^{2k - 1}_{ij}$  by $y^{2k - 1}_{ij}$ and the variable assigned to $x^{2k}_{ij}$ by 
 $z^{2k}_{ij}$.

 It follows from the matrix multiplication properties that for any path $p \in {\cal P}$, the
 monomial $w(p)$ is a monomial in the output polynomial. Each such path is uniquely specified 
 once we specify the odd steps in the path. Now, specifying odd steps in the path corresponds 
 to specifying a variable from each of the odd numbered matrices. To count number of such ways, 
 let us first consider the case when $d$ is even. There are $d/2$ odd numbered matrices and we 
 have $n^2$ ways to choose a variable from each of these $d/2$ matrices except for the first matrix 
 for which we can only choose a variable from the $1^{st}$ row since our output polynomial is 
 the $(1,1)^{th}$ entry. Thus, there are $n^{d-1}$ 
 number of ways to specify one variable each from the odd numbered matrices, the number of 
 such paths is also $n^{d-1}$. We get the same count for the case when $d$ is odd using the 
 similar argument. Since once the odd steps are chosen, there is only one way to 
 choose the even steps, all these $n^{d-1}$ monomials give rise to non-zero entries in different
 rows and columns in the matrix $M_{f^B}$. Hence, the matrix is an identity block of dimension $n^{d-1}$ upto a permutation
 of rows and columns and thus it has rank equal to $n^{d-1}$.
 \end{proof}
 
 \begin{theorem} \label{t:3}
 Any homogeneous $\Sigma \Pi \Sigma$ circuit for computing the product of $d$ $n \times n$ matrices requires 
 $\Omega(n^{d-1}/2^d)$ size.
 \end{theorem}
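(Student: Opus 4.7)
The plan is straightforward: combine the upper bound on maxrank for homogeneous $\Sigma\Pi\Sigma$ circuits (Lemma~\ref{l:7}) with the lower bound on maxrank for the iterated matrix multiplication polynomial (Lemma~\ref{l:8}) to derive a lower bound on the top fan-in, which in turn lower bounds the size.

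First, I would fix a homogeneous $\Sigma\Pi\Sigma$ circuit $\Phi$ of top fan-in $k$ that computes the polynomial $f$ defined in the text, namely the $(1,1)$-entry of the product of the $d$ symbolic matrices $A^1, \ldots, A^d$. Since the output polynomial $f$ is homogeneous of degree $d$, Lemma~\ref{l:7} applies and tells us that for every bijective partition $A : X \rightarrow Y \cup Z$ of the underlying variables, $\maxrank(M_{\Phi^A}) \le k \cdot 2^d$. Since $\Phi$ and $f$ compute the same polynomial, $M_{\Phi^A} = M_{f^A}$, so the same upper bound holds for $\maxrank(M_{f^A})$ under any such partition $A$.

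Next, I would instantiate this with the specific partition $B$ produced by Lemma~\ref{l:8}, which guarantees $\maxrank(M_{f^B}) = n^{d-1}$. Chaining the two bounds yields $n^{d-1} \le k \cdot 2^d$, hence $k \ge n^{d-1}/2^d$. Since the top fan-in is at most the total number of gates, the size of $\Phi$ is $\Omega(n^{d-1}/2^d)$, as desired.

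There is essentially no obstacle here; the heavy lifting has already been done by Lemmas~\ref{l:7} and~\ref{l:8}, and the final theorem is a one-line combination of the two. The only minor thing to mention is why we are justified in assuming $\Phi$ actually computes $f$ rather than just some polynomial that agrees with it on some partial derivatives structure — but this follows immediately since the maxrank bound depends only on the polynomial computed, not on its representation.
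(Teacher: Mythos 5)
Your proof is correct and follows exactly the paper's own argument: apply Lemma~\ref{l:7} to bound $\maxrank(M_{f^A}) \le k \cdot 2^d$ for every partition, specialize to the partition $B$ from Lemma~\ref{l:8} where $\maxrank(M_{f^B}) = n^{d-1}$, and conclude $k \ge n^{d-1}/2^d$. The extra remarks about the top fan-in lower-bounding the size and the maxrank depending only on the computed polynomial are harmless clarifications the paper leaves implicit.
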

  
 \begin{proof}
 Let $\Phi$ be a homogeneous depth-$3$ circuit computing $f$. Then, using Lemma~\ref{l:7}, for any partition $A$, $\maxrank(M_{f^A}) \le k \cdot 2^d.$
 From Lemma \ref{l:8}, we know that there exists a partition $B$ such that
 $\maxrank(M_{f^B}) = n^{d-1}$. Hence, $k \ge n^{d-1}/2^d$.
 \end{proof}

 It is worth noting that there exists a depth-$2$ circuit of size $n^{d-1}$ computing IMM polynomial.
 As observed in Lemma~\ref{l:8}, there are $n^{d-1}$ monomials in the IMM polynomial. 
 Hence, the sum of monomials representation for IMM will have top fan-in equal to $n^{d-1}$. 
 We remark that when the number of matrices is a constant, the upper and lower bounds for IMM polynomial match.

%

\section{Lower Bounds against Depth-3 Circuits of Bounded Product Dimension}
\label{sec:lowrank}

If a depth-$3$ circuit is not homogeneous, the fan-in of a product gate can be arbitrarily larger than the degree of the polynomial being computed. Hence the techniques in the previous section fails to give non-trivial circuit size lower bounds.
In this section, we study depth-$3$ circuits with bounded product dimension - where the affine forms feeding into every product gate are from a linear vector space of small dimension and prove exponential size lower bounds for such circuits. 





We will first prove an upper bound on the maxrank of the polynomial coefficient matrix for the polynomial computed by a depth-$3$ circuit of product dimension $r$, parameterized by $r$. Let $C$ be a $\Sigma\Pi\Sigma$
circuit of product dimension $r$ and top fan in $k$. 
Let $P^j$ be the product gates in $C$ for $j \in [k]$, given by $P^j = \Pi_{i = 1}^s{L_i^j}$. Without loss of generality, let us assume that the 
vectors $L_1^j, L_2^j,\ldots, L_r^j$ form a basis for the span of $\{L_1^j, L_2^j,\ldots,L_s^j\}$. Let $l_i^j$ be the homogeneous part of $L_i^j$ for each $i$. So, clearly 
the set $\{l_i^j\}_{i \in [r^{'}]}$ spans the set $\{l_i^j\}_{i \in [s]}$, where $r^{'} \leq r$. To simplify the notation, we will refer to $r^{'}$ as $r$.
In the following presentation, we will always use $d$ to refer to the
degree of the homogeneous polynomial computed by the circuit under consideration. Now, let us express each $l_i^j$ as a linear 
combination of $\{l_i^j\}_{i\in [r]}$. Let us now expand the product $P^j$ into a sum of product of homogeneous linear forms coming from $\{l_i^j\}_{i\in r}$. 
Let $P_d^j$ be the slice of $P^j$ of degree exactly $d$, for each $j \in [k]$. We now have the following observation. 

\begin{observation}
Let $C_d = \Sigma_{i \in [k]}{P_d^i}$. If $C$ computes a homogeneous polynomial of degree $d$, 
then $C_d$ computes the same polynomial. 
\end{observation}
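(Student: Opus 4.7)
The plan is to derive the observation directly from uniqueness of the homogeneous decomposition of a polynomial. Let $f$ denote the polynomial computed by $C$, so by hypothesis $f$ is homogeneous of degree $d$ and $f = \sum_{j \in [k]} P^j$.

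First I would expand each gate. Since $P^j = \prod_{i=1}^{s} L_i^j$ and each $L_i^j$ is affine, write $L_i^j = l_i^j + c_i^j$ with $l_i^j$ its homogeneous linear part and $c_i^j \in \F$. Expanding the product gives the homogeneous decomposition
\[
P^j \;=\; \sum_{e=0}^{s} P_e^j,
\]
where $P_e^j$ is the sum over all choices of exactly $e$ indices $i$ at which we pick the $l_i^j$ term (and the constant $c_i^j$ at the remaining positions); this $P_e^j$ is homogeneous of degree $e$ and agrees with the definition of the degree-$e$ slice used in the excerpt.

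Next I would compare homogeneous components on both sides of $f = \sum_j P^j$. Summing over $j$,
\[
f \;=\; \sum_{j \in [k]} \sum_{e=0}^{s} P_e^j \;=\; \sum_{e=0}^{s} \Bigl(\sum_{j \in [k]} P_e^j\Bigr).
\]
The grouping on the right is the homogeneous decomposition of the total polynomial. Because $f$ is homogeneous of degree $d$, the degree-$e$ component of $f$ is $f$ itself when $e = d$ and is $0$ otherwise. By uniqueness of the homogeneous decomposition, matching the degree-$d$ components yields
\[
f \;=\; \sum_{j \in [k]} P_d^j \;=\; C_d,
\]
which is exactly the claim. (The components of degrees $e \neq d$ must cancel out across the $k$ gates, but we do not need to use this explicitly.)

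There is no real obstacle here; this is essentially the standard fact that the degree-$d$ part of a sum equals the sum of the degree-$d$ parts, combined with the assumption that $f$ is already homogeneous of degree $d$. The only mild care needed is to note that $P_d^j$ as defined in the excerpt coincides with the degree-$d$ homogeneous component of $P^j$, which is immediate from the expansion $L_i^j = l_i^j + c_i^j$.
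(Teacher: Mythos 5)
Your proof is correct and takes essentially the same approach as the paper: the paper's one-line argument that "the monomials for degree other than $d$ cancel each other across the different product gates" is precisely the uniqueness of the homogeneous decomposition that you invoke, and your identification of $P_d^j$ with the degree-$d$ homogeneous component of $P^j$ is what makes that cancellation statement equivalent to $f = C_d$.
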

\begin{proof}
 The proof follows from the fact that since $C$ computes a homogeneous polynomial of degree $d$, 
the monomials for degree other than $d$ cancel each other across the different product gates.
\end{proof}

We now look at each product in $P_d^j$, which is a sum of products. Each such product is a product of homogeneous linear forms from $\{l_i^j\}_{i\in [r]}$ of degree exactly $d$. To simplify it further, we will use the following lemma.

\begin{lemma}(\cite{Shp01})\label{lem:sopdecomposition}
Any monomial of degree $d$ can be written as a sum of $d^{th}$ power of some $2^d$ linear forms. Further, each of the $2^d$ linear forms in the expression corresponds to $\Sigma_{x\in S}{x}$ for a subset $S$ of $[d]$.
\end{lemma}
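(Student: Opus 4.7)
The plan is to prove the classical polarization / inclusion--exclusion identity
\[
 d! \cdot x_1 x_2 \cdots x_d \;=\; \sum_{S \subseteq [d]} (-1)^{d-|S|} \Bigl(\sum_{i \in S} x_i\Bigr)^d,
\]
and then specialize the $x_i$'s to recover an arbitrary monomial of degree $d$. Concretely, given a monomial $y_1^{a_1}\cdots y_m^{a_m}$ with $\sum_j a_j = d$, I would introduce $d$ formal ``slot'' variables and let $a_j$ of them represent $y_j$; applying the identity to the slot variables and then collapsing slots back onto the original $y_j$'s yields an expression as a sum of $d$-th powers of linear forms, each of which is a subset sum of slot variables, i.e.\ a form of the shape $\sum_{x \in S} x$ after the renaming. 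There are $2^d$ subsets of $[d]$, matching the claimed count.

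To verify the identity itself, I would expand the right-hand side using the multinomial theorem. For a fixed monomial $x_1^{b_1}\cdots x_d^{b_d}$ of degree $d$, its coefficient inside $\bigl(\sum_{i \in S} x_i\bigr)^d$ equals $\binom{d}{b_1,\ldots,b_d}$ when $T := \{i : b_i > 0\} \subseteq S$ and $0$ otherwise. Hence its total coefficient on the right becomes
\[
 \binom{d}{b_1,\ldots,b_d}\sum_{S \supseteq T} (-1)^{d-|S|}.
\]
The standard identity $\sum_{S' \subseteq A}(-1)^{|S'|} = 0$ for $A \neq \emptyset$ makes the inner sum vanish unless $T = [d]$, in which case only $S = [d]$ survives and the inner sum equals $1$. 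The constraint $T = [d]$ together with $\sum b_i = d$ forces $b_i = 1$ for every $i$, so only the target monomial $x_1 x_2 \cdots x_d$ remains, with coefficient $\binom{d}{1,\ldots,1} = d!$.

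The only real obstacle worth flagging is the characteristic of the field: to convert the boxed identity into an expression for $x_1\cdots x_d$ itself, we must divide by $d!$, so the statement as written requires working in characteristic $0$ or characteristic greater than $d$. Since the surrounding development (in the spirit of Shpilka's depth-$3$ analysis) already operates in this regime, this restriction fits the ambient setup without extra work; otherwise one would need a separate characteristic-$p$ decomposition, which is not pursued here.
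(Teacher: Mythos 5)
Your proof is correct and is essentially the standard argument (sometimes attributed to Fischer) that the cited reference \cite{Shp01} uses: the inclusion--exclusion / polarization identity
\[
d! \cdot x_1 x_2 \cdots x_d \;=\; \sum_{S \subseteq [d]} (-1)^{d-|S|}\Bigl(\sum_{i \in S} x_i\Bigr)^{d},
\]
verified via the multinomial theorem plus the fact that $\sum_{S'\subseteq A}(-1)^{|S'|}$ vanishes for nonempty $A$, followed by specializing the $d$ ``slot'' variables to the actual variables of the given monomial (repeating a variable when its exponent exceeds one). The paper itself supplies no proof and simply cites Shpilka, so there is no alternative argument to compare against; yours is the canonical one. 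Your observation about the field characteristic is a genuine and correct caveat: the identity produces $d!\cdot x_1\cdots x_d$, so recovering the monomial itself needs $d!$ to be invertible, i.e.\ $\mathrm{char}(\F)=0$ or $\mathrm{char}(\F) > d$. The paper is silent about this hypothesis; Lemma~\ref{lem:soplin} silently absorbs the resulting constants $c_q$ (and the signs $(-1)^{d-|S|}$), so the statement in the paper that each linear form ``corresponds to'' $\sum_{x\in S} x$ should really be read as ``is a scalar multiple of'' a subset sum. Flagging the characteristic restriction is the one place where you add something the paper leaves implicit.
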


By applying this lemma to each product term in the sum of product representation of $P_d^j$, we obtain the following:

\begin{lemma}{\label{lem:soplin}}
 If $P_d^j = \Sigma_i\Pi_{u=1}^d{l_{\alpha_{iu}}^j}$ where $\alpha_{iu} \in [r]$, then $P_d^j = \Sigma_{q=1}^v{c_qL_q}^d$ for some homogeneous linear forms $L_q$, constants $c_q$ and
$v \leq {{d+r}\choose{r}}$.
\end{lemma}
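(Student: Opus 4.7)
The plan is to apply Lemma \ref{lem:sopdecomposition} (Shpilka's decomposition) to each summand in the sum-of-products expression for $P_d^j$, and then argue that, although each individual application produces up to $2^d$ linear forms, the entire collection of linear forms across all summands draws from a pool of at most $\binom{d+r}{r}$ distinct forms. Collecting like $d$-th powers and summing their coefficients then yields the desired representation.

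First I would fix one summand $\prod_{u=1}^d l_{\alpha_{iu}}^j$ and apply Shpilka's lemma to it, treating $y_u := l_{\alpha_{iu}}^j$ as formal variables. This writes the summand as $\sum_{S \subseteq [d]} \beta_S \bigl( \sum_{u \in S} y_u \bigr)^d$ for scalars $\beta_S \in \F$. Substituting back, we have $\sum_{u \in S} l_{\alpha_{iu}}^j = \sum_{t=1}^r a_{S,t}\, l_t^j$, where $a_{S,t} := |\{u \in S : \alpha_{iu} = t\}|$ is a nonnegative integer satisfying $\sum_{t=1}^r a_{S,t} = |S| \le d$. Thus the linear form inside each $d$-th power is completely determined by the tuple $(a_{S,1}, \ldots, a_{S,r})$.

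Next I would observe that every linear form that arises in this way (across all summands $i$ and all subsets $S$) lies in the set $\bigl\{ \sum_{t=1}^r a_t\, l_t^j : a_t \in \mathbb{Z}_{\ge 0},\ \sum_t a_t \le d \bigr\}$. A standard stars-and-bars count gives that the number of such tuples $(a_1, \ldots, a_r)$ is exactly $\binom{d+r}{r}$, which bounds the number of distinct linear forms (and hence of distinct $d$-th powers, up to scaling) that can appear. Summing the Shpilka expansions over all summands of $P_d^j$, grouping by the underlying linear form, and absorbing any scalar factors (including any $\lambda^d$ from rescaling) into a single coefficient, we obtain $P_d^j = \sum_{q=1}^v c_q L_q^d$ with $v \le \binom{d+r}{r}$, as claimed.

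The argument is essentially a bookkeeping exercise rather than a deep calculation, so there is no substantive obstacle. The only subtlety requiring mild care is verifying that two tuples $(a_1, \ldots, a_r)$ coming from different summands indeed yield identical linear forms (not merely proportional ones), so that summing their weights into a single $c_q$ is legitimate; this is immediate from the way the tuples determine the forms. A second small point is that Shpilka's decomposition uses $1/d!$, so one should either restrict to $\mathrm{char}(\F) = 0$ or $\mathrm{char}(\F) > d$, matching the standing assumption under which the overall lower bound argument of this section operates.
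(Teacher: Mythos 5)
Your proof is correct and follows essentially the same approach as the paper: apply the sum-of-powers decomposition (Lemma~\ref{lem:sopdecomposition}) to each product term, observe that every resulting linear form lies in the set $\bigl\{\sum_{t\in[r]} a_t\, l_t^j : a_t \in \mathbb{Z}_{\geq 0},\ \sum_t a_t \leq d\bigr\}$ of size $\binom{d+r}{r}$ via stars-and-bars, and group like $d$-th powers. Your remark about the characteristic of $\F$ (needed because Shpilka's identity involves dividing by $d!$) is a valid caveat that the paper leaves implicit.
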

 
\begin{proof}
Consider any product term in the sum of products expansion $P_d^j$ as described, say $S = \Pi_{u=1}^d{l_{\alpha_{iu}}^j}$. From Lemma \ref{lem:sopdecomposition}, we know that $S$ can be written 
as $S = \Sigma_{t=1}^{2^d} {L_t}^d$, where for every subset $U$ of $[d]$, there is a $\beta \in [2^d]$ such that $L_{\beta} = \Sigma_{u\in U}{l_{\alpha_{iu}}^j}$.  In general, each $L_t$ can be 
written as $L_t = \Sigma_{i\in [r]}{\gamma_i l_i^j}$ for non-negative integers $\gamma_i$ satisfying $\Sigma_{i\in [r]}{\gamma_i \leq d}$. Now, each of the product terms in $P_d^j$ can
be expanded in a similar fashion into $d^{th}$ powers of linear forms, each from the set $\{\Sigma_{i\in [r]}{\gamma_i l_i^j}: \gamma_i\in {\mathbb Z^{\geq 0}}\wedge \Sigma_{i\in [r]}{\gamma_i \leq d}\}$. The number of distinct such linear forms is at most ${{d+r}\choose{r}}$. Hence, the lemma follows. 
\end{proof}

We now bound the maxrank of the power of a homogeneous linear form
which in turn will give us a bound for
$P_d^j$ due to the subadditivity of maxrank. 

\begin{lemma}\label{lem:coplin}
 Given a linear form $l$ and any positive integer $t$, the maxrank of $l^t$ is at most $t + 1$ 
for any partition of the set $X$ of variables into $Y$ and $Z$.
\end{lemma}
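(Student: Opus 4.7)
The plan is to exploit the split structure of the variables directly via the binomial theorem, then invoke the multiplicativity property (Proposition \ref{p:4}) and the subadditivity property (Proposition \ref{p:3}) of maxrank that were established earlier in Section \ref{sec:tool}.

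First I would decompose the linear form along the partition. Given $l \in \F[Y \cup Z]$, write $l = l_Y + l_Z$, where $l_Y \in \F[Y]$ collects the $Y$-part together with any constant term and $l_Z \in \F[Z]$ collects the remaining part in $Z$. Since $Y$ and $Z$ are disjoint, the variables of $l_Y$ and $l_Z$ are disjoint, so the binomial theorem applies and gives
\[
l^t \;=\; \sum_{i=0}^{t} \binom{t}{i}\, l_Y^{\,i}\, l_Z^{\,t-i}.
\]

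Next I would bound the maxrank of each term. Any polynomial $h \in \F[Y]$ has $\maxrank(M_h) \le 1$: its polynomial coefficient matrix has all its nonzero entries in the single column indexed by the empty monomial in $Z$, so every substitution leaves a matrix of rank at most $1$. By the same reasoning, any $h' \in \F[Z]$ has $\maxrank(M_{h'}) \le 1$. Applying Proposition \ref{p:4} (with $Y_1 = Y$, $Z_1 = \emptyset$ for $l_Y^{\,i}$, and $Y_2 = \emptyset$, $Z_2 = Z$ for $l_Z^{\,t-i}$) yields $\maxrank(M_{l_Y^{\,i} l_Z^{\,t-i}}) \le 1 \cdot 1 = 1$ for every $i \in \{0,1,\ldots,t\}$. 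Scalar multiplication by $\binom{t}{i}$ clearly does not increase maxrank.

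Finally, I would sum the contributions. By Proposition \ref{p:3} applied $t$ times,
\[
\maxrank(M_{l^t}) \;\le\; \sum_{i=0}^{t} \maxrank\bigl(M_{\binom{t}{i}\, l_Y^{\,i}\, l_Z^{\,t-i}}\bigr) \;\le\; t+1,
\]
which is the claimed bound. I do not see a genuine obstacle here: the only small subtlety is making sure the constant term of $l$ (if $l$ is affine rather than strictly homogeneous) is absorbed into one of the two parts so that Proposition \ref{p:4} applies cleanly with truly disjoint variable sets. Once that bookkeeping is done, the proof is a two-line combination of the already-established maxrank arithmetic.
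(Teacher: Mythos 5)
Your proof is correct and follows essentially the same route as the paper: split $l = l_Y + l_Z$, apply the binomial theorem, bound each of the $t+1$ terms by $1$, and conclude via subadditivity of maxrank. The only cosmetic difference is that you justify the per-term bound via the multiplicativity Proposition~\ref{p:4} on disjoint variable sets, whereas the paper invokes Proposition~\ref{p:5} (multiplication by a pure-$Y$ or pure-$Z$ polynomial does not increase maxrank); both are valid single-line justifications of the same fact.
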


\begin{proof}
Partition the linear form $l$ into two parts,  $l = l_y + l_z$, where $l_y$ consists of all variables 
in $l$ from the set $Y$ and $l_z$ consists of the variables which come from the set $Z$. By the binomial theorem, $l^t = \Sigma_{i=0}^t {t\choose i}l_y^i l_z^{t-i}$. Now, $l_y^i$ is a polynomial just in $Y$ variables and hence its maxrank can be bounded above by $1$, and multiplication by $l_z^{t-i}$ does not increase the maxrank any further, by proposition~\ref{p:5}. Hence, the maxrank of each term in the sum is at most $1$ and there are at most $t+1$ terms, so, by using the subadditivity of maxrank, we get an upper bound of $t+1$ on the maxrank of the sum.
\end{proof}

Now we are all set to give an upper bound on the maxrank of $P_d^j$. 

\begin{lemma}
 The max rank of $P_d^j$ is at most $(d+1){{d+r}\choose{r}}$ for any partition of the set $X$ of variables into $Y$ and $Z$.
\end{lemma}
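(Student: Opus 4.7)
The plan is to string together the three tools already built up in this section. The desired bound decomposes very naturally as a product of two factors: $\binom{d+r}{r}$, which should come from Lemma~\ref{lem:soplin}, and $d+1$, which should come from Lemma~\ref{lem:coplin}. The glue between them will be the subadditivity of maxrank, Proposition~\ref{p:3}.

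Concretely, I would first invoke Lemma~\ref{lem:soplin} to write
\[ P_d^j \;=\; \sum_{q=1}^{v} c_q L_q^{\,d}, \]
with $v \le \binom{d+r}{r}$, where each $L_q$ is a homogeneous linear form and each $c_q$ is a field constant. Next I would bound the maxrank of a single summand $c_q L_q^{\,d}$. Scaling by the constant $c_q$ does not change the rank of the polynomial coefficient matrix under any substitution, so $\maxrank(M_{c_q L_q^{\,d}}) = \maxrank(M_{L_q^{\,d}})$, and by Lemma~\ref{lem:coplin} the latter is at most $d+1$, uniformly over the chosen partition $Y \cup Z$ of $X$.

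Finally, I would apply Proposition~\ref{p:3} inductively to the sum of $v$ polynomials, yielding
\[ \maxrank(M_{P_d^j}) \;\le\; \sum_{q=1}^{v} \maxrank(M_{L_q^{\,d}}) \;\le\; v \cdot (d+1) \;\le\; (d+1)\binom{d+r}{r}, \]
which is the claimed bound.

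There is no real obstacle here: the heavy lifting has already been done in Lemmas~\ref{lem:soplin} and~\ref{lem:coplin}, and the present statement is essentially the bookkeeping step that multiplies the two factors via subadditivity. The only thing worth being careful about is that the bounds in the two preceding lemmas hold for \emph{every} partition of $X$ into $Y$ and $Z$, so the resulting bound on $\maxrank(M_{P_d^j})$ is likewise partition-independent, as the statement requires.
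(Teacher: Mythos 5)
Your proof is correct and follows exactly the same route as the paper: decompose $P_d^j$ via Lemma~\ref{lem:soplin}, bound each power $L_q^d$ by Lemma~\ref{lem:coplin}, and combine with subadditivity (Proposition~\ref{p:3}). The paper states this in one line; you have simply written out the same argument in full.
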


\begin{proof}
The proof follows from Lemma \ref{lem:soplin}, Lemma \ref{lem:coplin} and the subadditivity of max rank.  
\end{proof}

Now we are ready to prove the theorem.
\begin{theorem}
\label{thm:lowrank}
There is an explicit polynomial in $n$ variables $X$ and degree at most $\frac{n}{2}$ for which any $\Sigma\Pi\Sigma$ circuit $C$ of product dimension at most $\frac{n}{10}$ requires size $2^{\Omega(n)}$.
\end{theorem}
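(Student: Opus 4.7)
The plan is to combine the per-gate max-rank upper bound already established for $\Sigma\Pi\Sigma$ circuits of bounded product dimension with the existence of an explicit full max-rank homogeneous polynomial of the right degree, and to show that the quantitative gap between the two quantities forces the top fan-in (and hence the size) to be exponential in $n$.

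First I would extend the single-gate bound of the preceding lemma to the whole circuit. Let $C$ be a $\Sigma\Pi\Sigma$ circuit of product dimension at most $r \le n/10$ and top fan-in $k$, computing a homogeneous polynomial $f$ of degree $d \le n/2$. Using subadditivity of max-rank (Proposition~\ref{p:3}) together with the bound $\maxrank(M_{P^j_d}) \le (d+1)\binom{d+r}{r}$ proved just above, for every partition $A:X \to Y\cup Z$ we obtain
\[
\maxrank(M_{f^A}) \;\le\; k\,(d+1)\binom{d+r}{r}.
\]

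Second I would fix the hard polynomial. Take $f$ to be an explicit homogeneous full max-rank polynomial of degree $n/2$ on $n$ variables, split evenly as $|Y|=|Z|=n/2$; as noted in the section on full rank polynomials, several such polynomials exist in the literature~\cite{J08,Raz06,Raz09} (any full rank polynomial is automatically full max-rank), and they achieve $\maxrank(M_{f^A}) = 2^{n/2}$ for every balanced partition $A$. Combining this with the upper bound gives
\[
k \;\ge\; \frac{2^{n/2}}{(d+1)\binom{d+r}{r}}.
\]
Plugging in $d \le n/2$ and $r \le n/10$, the binomial coefficient satisfies $\binom{d+r}{r} \le \binom{3n/5}{n/10}$, and a routine binary-entropy estimate yields $\binom{3n/5}{n/10} \le 2^{(3n/5)\,H(1/6)} \le 2^{0.391\,n}$. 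Therefore
\[
k \;\ge\; \frac{2^{n/2}}{(n/2+1)\cdot 2^{0.391\,n}} \;=\; \frac{2^{0.109\,n}}{n/2+1} \;=\; 2^{\Omega(n)},
\]
which gives the claimed size lower bound since the size of $C$ is at least its top fan-in.

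The main (and essentially only) obstacle is making sure the binomial $\binom{d+r}{r}$ does not swamp the target max-rank $2^{n/2}$; this is exactly what pins down the constant $1/10$ in the statement, since one needs $(d{+}r)\,H(r/(d{+}r))/n < 1/2$, and the choice $r \le n/10$ leaves a comfortable exponential slack. A minor subtlety is that the circuit itself is not assumed to be homogeneous, but this is handled exactly as in the observation preceding Lemma~\ref{lem:soplin}: since the target polynomial $f$ is homogeneous of degree $d$, only the degree-$d$ slice of each product gate contributes, and the per-gate bound above applies to that slice.
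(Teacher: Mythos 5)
Your argument is correct and follows the same high-level scheme as the paper (degree-$d$ slice reduction for a homogeneous target, per-gate max-rank bound $(d+1)\binom{d+r}{r}$, subadditivity across the top fan-in, and a rank/bound comparison), but you choose the hard polynomial differently. The paper does not appeal to a previously-known full max-rank polynomial: it builds one from scratch by fixing a balanced partition $A:X\to Y\cup Z$, enumerating all size-$\frac{n}{4}$ subsets $S_1,\dots,S_w$ of $Y$ and $T_1,\dots,T_w$ of $Z$ with $w=\binom{n/2}{n/4}$, and setting $Q^A = \sum_{i=1}^{w}\prod_{y\in S_i}\prod_{z\in T_i}yz$, a multilinear homogeneous polynomial of degree $\frac{n}{2}$ whose coefficient matrix under that single partition $A$ is a permutation submatrix of rank $w\ge 2^{n/2}/\sqrt{n}$. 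You instead invoke an explicit full max-rank homogeneous polynomial of degree $\frac{n}{2}$ on $n$ variables from the literature, getting the slightly cleaner bound $2^{n/2}$ for every balanced partition. Both give $2^{\Omega(n)}$ after the same entropy estimate on $\binom{3n/5}{n/10}$. The trade-off: your route shaves the $\sqrt{n}$ factor and is a bit shorter, but it shifts the burden of explicitness onto a cited construction (and you need the cited polynomial to be homogeneous of exactly degree $\frac{n}{2}$ on $n$ variables, which is a nontrivial constraint to impose on a borrowed object); the paper's route is entirely self-contained and needs high rank only for one partition, which is all the argument actually uses. Your handling of non-homogeneity via the degree-$d$ slice observation matches the paper and is correct.
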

\begin{proof}
We describe the explicit polynomial $Q(X)$ first. Fix an equal sized partition $A$ of $X$ into $Y$ and $Z$.
Order all subsets of $Y$ and $Z$ of size exactly $\frac{n}{4}$ in any order, say $S_1, S_2, \ldots, S_w$ and $T_1, T_2, \ldots, T_w$, where $w = {{\frac{n}{2}}\choose{\frac{n}{4}}}$. Let us define the polynomial $Q^A(Y,Z)$ for the partition $A$ as follows: $$Q^A(Y,Z) = {\Sigma_{i = 1}^w{{\Pi_{y\in S_i}}{\Pi_{z\in T_i}}yz}}$$ We obtain the polynomial $Q(X)$ by replacing variables in $Y$ and $Z$ in $Q^A(Y,Z)$ by $A^{-1}(Y)$ and $A^{-1}(Z)$ respectively. The polynomial $Q(X)$ is homogeneous and of degree $\frac{n}{2}$.

Now we prove the size lower bound. The polynomial coefficient matrix of $Q$ with respect to the partition $Y$ and $Z$ is simply the diagonal submatrix, and the rank is at-least $\frac{2^{\frac{n}{2}}}{\sqrt{n}}$ 
Since the circuit computes the polynomial, the top fan in $k$ should be at least $\frac{\frac{2^{\frac{n}{2}}}{\sqrt{n}}}{{{d+r}\choose{r}}{(d+1)}}$, for $d = \frac{n}{2}$, and product dimension $\frac{n}{10}$, we have a lower bound of $2^{cn}$, for a constant $c > 0$.
\end{proof}

\paragraph{An Impossibility result:}
Consider the trivial depth-$2$ circuit for any polynomial, where each monomial is computed by the product gate. Viewing this as a depth-$3$ circuit, the total dimension of the circuit is bounded above by $n$, since there are only $n$ variables. Can we have a circuit with a smaller total dimension $r$ computing the same polynomial? We show that this is not always possible if  $r = \alpha.n$ for a sufficiently small constant $\alpha < 1$. In particular, we show that even for $r = \frac{n}{10}$, they cannot compute the polynomial that we constructed in the proof of theorem~\ref{thm:lowrank} irrespective of the size of the circuit. 
As a first step, using ideas developed in the previous subsection, we prove the following upper bound for maxrank of such circuits. 

\begin{lemma}
If the total dimension of a $\Sigma\Pi\Sigma$ circuit is $r$, then the maxrank
of the polynomial computed by the circuit is bounded above by  ${{d+r}\choose{r}}{(d+1)}$.
\end{lemma}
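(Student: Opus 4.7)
The plan is to mimic the proof of the per-gate bound from the previous subsection, but to exploit the fact that a bound on \emph{total} dimension gives a \emph{single} basis of linear forms that is shared by every product gate in the circuit. This is exactly what will allow the bound to be independent of the top fan-in $k$, in contrast to the product-dimension bound, where the basis $\{l_i^j\}_{i \in [r]}$ was different for each gate $j$.

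First I would fix a basis $L_1, \ldots, L_r$ for the span of all affine forms at the bottom sum gates (afforded by the total-dimension hypothesis) and let $l_1, \ldots, l_r$ be their homogeneous parts, which again span a space of dimension at most $r$. Assuming (as in the preceding subsection) that the circuit computes a homogeneous polynomial of degree $d$, the Observation before Lemma~\ref{lem:soplin} lets me pass to the degree-$d$ slice $C_d = \sum_j P_d^j$ without changing the output. Each $P_d^j$ is now a homogeneous degree-$d$ polynomial in the common variables $l_1, \ldots, l_r$, so Lemma~\ref{lem:soplin} expresses it as a sum of $d$-th powers of linear forms drawn from the fixed set
\[
  \mathcal{L} \;=\; \Bigl\{\, \textstyle\sum_{i=1}^r \gamma_i\, l_i \;:\; \gamma_i \in \mathbb{Z}_{\geq 0},\ \sum_i \gamma_i \leq d \,\Bigr\},
\]
and $|\mathcal{L}| \leq \binom{d+r}{r}$.

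Now the key step is simply summing over $j$. Because the basis is global, the linear forms arising from each $P_d^j$ already belong to $\mathcal{L}$, so
\[
  \sum_{j=1}^{k} P_d^j \;=\; \sum_{L \in \mathcal{L}} c_L \cdot L^d
\]
for some field coefficients $c_L$ obtained by aggregating contributions across the $k$ product gates. The number of distinct $d$-th powers remains at most $\binom{d+r}{r}$, independent of $k$. Applying Lemma~\ref{lem:coplin} to each $L^d$ to get $\maxrank \leq d+1$, and then using subadditivity of maxrank (Proposition~\ref{p:3}) over the at most $\binom{d+r}{r}$ summands, I obtain $\maxrank(M_f) \leq (d+1)\binom{d+r}{r}$. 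The only conceptual hurdle — and the whole point of the lemma — is the shift from a per-gate basis to a common one; once that observation is in hand, the earlier lemmas do all the work and no new technical ingredient is needed.
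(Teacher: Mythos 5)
Your proof is correct and takes essentially the same route as the paper: fix a single global basis of homogeneous linear forms afforded by the total-dimension hypothesis, pass to the degree-$d$ slice, apply Lemma~\ref{lem:soplin} and Lemma~\ref{lem:coplin}, and use subadditivity of maxrank. The only difference is that you spell out explicitly the aggregation step $\sum_j P_d^j = \sum_{L \in \mathcal{L}} c_L L^d$, which the paper leaves implicit but is indeed the point of having a common basis (and is why the bound is independent of the top fan-in $k$).
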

\begin{proof}
 Observe that if the span of all the affine forms occurring
in a depth-$3$ $\Sigma\Pi\Sigma$ circuit is $r$ (spanned by the basis $L_1, L_2,\ldots, L_r$), then each of the product gates in the circuit can be 
decomposed into sum of power of homogeneous linear forms as in Lemma \ref{lem:soplin}. Moreover, each of these homogeneous linear forms will be of the form 
$\Sigma_i \alpha_i l_i$, where $\alpha_i\in {\mathbb Z^{\geq0}} \wedge \Sigma_i{\alpha_i} \leq d$ and $l_i$ is the homogeneous part of $L_i$ for each $i$ in $[r]$. 
Consequently, the maxrank for the circuit is bounded by ${{d+r}\choose{r}}{(d+1)}$ by Lemma \ref{lem:coplin} and the subadditivity of max rank. 
\end{proof}

Thus, a $\Sigma\Pi\Sigma$ circuit of total dimension bounded by $r$, can compute the polynomial $Q$ described in the proof of \ref{thm:lowrank}, only if $${{d+r}\choose{r}}{(d+1)} \geq \frac{2^{\frac{n}{2}}}{\sqrt{n}}~.$$ This in turn implies that for $r \leq \frac{n}{10}$, such circuits cannot compute the polynomial $Q$ irrespective of the number of gates they use.

\section{Lower Bounds against Product-sparse Formulas} 
\label{sec:product-sparse}
 Let $Y = \{y_1, y_2, \dots, y_m\}$ and $Z = \{z_1, z_2, \dots, z_m\}$.
 Let $\Phi$ be an arithmetic circuit defined over the field 
 $\F$ and the variables $Y \cup Z$. For a node $v$, let 
 us denote by $\Phi_v$ the sub-circuit rooted at $v$, and  denote by 
 $Y_v$ and $Z_v$, the set of 
 variables in $Y$ and $Z$ that appear in $\Phi_v$ respectively.
 Let us define, $a(v) = \min\{|Y_v|, |Z_v|\}$ and $b(v) = (|Y_v| + 
 |Z_v|)/2$. We say that a node $v$ is $k$-unbalanced if $b(v) - a(v) 
 \ge k$.
 Let $\gamma$ be a simple path from a leaf to the node $v$. We say that 
 $\gamma$ is $k$-unbalanced if it contains at least one $k$-unbalanced 
 node. We say that $\gamma$ is central if for every $u, u_1$ on the path 
 $\gamma$ such that there is an edge from $u_1$ to $u$ in $\Phi$, $b(u) 
 \le 2b(u_1)$. $v$ is said to be $k$-weak 
 if every central path that reaches $v$ is $k$-unbalanced.

 We prove that if $v$ is $k$-weak then the $\maxrank$ of the 
 matrix $M_v$ can be bounded. The proof goes via induction on $|\Phi_v|$ and 
 follows the same outline as that of \cite{Raz09}. It only differs in the case 
 of non-disjoint product gates which we include in full detail below. The proofs 
 of the rest of cases is given in the appendix~\ref{app:repeat-proof}.
 \begin{lemma} \label{l:2}
 Let $\Phi$ be a $(s,d)$-product-sparse formula over the 
 set of variables $\{y_1, \dots, y_m\}$ and $\{z_1, \dots, z_m\}$, and 
 let $v$ be a node in $\Phi$. Denote the product-sparse depth of $v$ by $d(v)$. If $v$ is $k$-weak, then,
 $\maxrank(M_v) \le 2^{s \cdot d(v)} \cdot |\Phi_v| \cdot 2^{b(v)-k/2} ~.$ 
 \end{lemma}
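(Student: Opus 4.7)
The plan is to proceed by strong induction on $|\Phi_v|$, following the outline of Raz's proof \cite{Raz09} for syntactic multilinear formulas but with one extra case that arises because product-sparse formulas permit non-disjoint product gates. The base case is when $v$ is a leaf (a variable or a constant), where both sides of the inequality are easy to check directly. For the inductive step, I split on the gate type at $v$.

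When $v$ is a sum gate with children $v_1,v_2$, I combine subadditivity of maxrank (Proposition~\ref{p:3}) with the inductive hypothesis applied to each child; here $d(v) = \max(d(v_1),d(v_2))$ and $b(v_i) \leq b(v)$ with $|\Phi_{v_1}|+|\Phi_{v_2}| < |\Phi_v|$, so the bound follows. When $v$ is a disjoint product gate, I use multiplicativity (Proposition~\ref{p:4}) together with Raz's central-path analysis to decide which of $v_1,v_2$ to apply the inductive hypothesis to; the product-sparse depth is again $d(v) = \max(d(v_1),d(v_2))$, so the $2^{s \cdot d(v)}$ factor needs no new treatment.

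The new case is when $v$ is a non-disjoint product gate. By the definition of a $(s,d)$-product-sparse formula, $v$ must then be $s$-sparse, so without loss of generality $v_2$ computes a polynomial with at most $2^s$ monomials. Corollary~\ref{c:3} gives $\maxrank(M_v) \leq 2^s \cdot \maxrank(M_{v_1})$, and since this non-disjoint product itself contributes to the product-sparse depth, $d(v_1) \leq d(v)-1$. If $v_1$ is $k$-weak, the inductive hypothesis yields $\maxrank(M_v) \leq 2^s \cdot 2^{s(d(v)-1)} \cdot |\Phi_{v_1}| \cdot 2^{b(v_1)-k/2} \leq 2^{s \cdot d(v)} |\Phi_v| 2^{b(v)-k/2}$, using $b(v_1) \leq b(v)$ and $|\Phi_{v_1}| < |\Phi_v|$. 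If $v_1$ is not $k$-weak, pick a central path $\gamma_1$ to $v_1$ that contains no $k$-unbalanced node and consider extending $\gamma_1$ by the edge $(v_1,v)$: either the extension fails centrality, forcing $b(v) > 2b(v_1) \geq 2a(v_1)$ and letting me bound $\maxrank(M_v) \leq 2^{s+a(v_1)} < 2^{s+b(v)/2}$ via Proposition~\ref{p:2}; or the extension is central, in which case the $k$-weakness of $v$ forces $v$ itself to be the $k$-unbalanced node (so $a(v) \leq b(v)-k$) and I use $\maxrank(M_v) \leq 2^{a(v)} \leq 2^{b(v)-k}$ directly.

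The main obstacle I anticipate is the bookkeeping in the non-disjoint product case: absorbing the factor $2^s$ from Corollary~\ref{c:3} into the $2^{s \cdot d(v)}$ slack and verifying that both ``non-$k$-weak child'' sub-cases still meet the target bound. These sub-cases rely on the observation $k \leq b(v)$, which holds because any $k$-unbalanced node $u$ on a central path witnessing $k$-weakness satisfies $k \leq b(u) \leq b(v)$. Unlike Raz's disjoint-product argument, where the additive decomposition $b(v)=b(v_1)+b(v_2)$ of the balance measure provides slack, here $v_1$ and $v_2$ share variables and $b(v_1)$ may even equal $b(v)$; the needed saving must instead come from the $s$-sparsity of one child, which is exactly what the $2^s$ multiplier encodes.
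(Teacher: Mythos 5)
Your proposal follows essentially the same approach as the paper's proof: strong induction on $|\Phi_v|$ with the Raz case analysis for sum and disjoint-product gates, and a new $s$-sparse product case where Corollary~\ref{c:3} contributes a factor $2^s$ that is absorbed via $d(v_1)\le d(v)-1$, with the non-$k$-weak child handled via the $b(v)>2b(v_1)$ dichotomy. Your reorganization folds the ``$v$ itself is $k$-unbalanced'' subcase into the product-gate analysis rather than treating it as a separate top-level case (as the paper does), but the bounds obtained are identical; the only slight imprecision is in the sum-gate sketch, where you claim to apply the inductive hypothesis to \emph{both} children without noting that one may fail to be $k$-weak and must instead be bounded by Proposition~\ref{p:2} --- but this is exactly the standard Raz detail you correctly defer to.
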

 \begin{proof}
  Consider the case when $v$ is a $s$-sparse product gate with children $v_1$ and $v_2$ and $v$. 
  Without loss of generality it can be assumed that $v$ is not disjoint.
  
  Let us suppose that the product-sparse depth 
  of $v$ is $d$. Without loss of generality, assume that $v_2$ computes 
  a sparse polynomial having at most $2^s$ number of monomials. Since 
  $v$ is not disjoint, product-sparse depth of $v_1$ is at most 
  $d - 1$. Thus using Corollary \ref{c:3},
  \begin{eqnarray} \label{eqn:1}
  \maxrank(M_v) \le 2^s \cdot \maxrank(M_{v_1}) 
  \end{eqnarray}
  Consider the following cases based on whether $b(v) \le 2b(v_1)$ or 
  not.
  
  If $b(v) \le 2b(v_1)$, then $v_1$ is also $k$-weak. Therefore, by 
  induction hypothesis, 
  \[ \maxrank(M_{v_1}) \le 2^{s(d-1)} \cdot |\Phi_{v_1}| \cdot 
     2^{b(v_1) - k/2} \le 2^{s(d-1)} \cdot |\Phi_{v}| \cdot 
     2^{b(v) - k/2} ~.\]
  Thus, using Equation \ref{eqn:1},
  $ \maxrank(M_v) \le 2^{sd} \cdot |\Phi_v| \cdot 2^{b(v) - k/2} ~. $
  If $b(v) > 2b(v_1)$, then $b(v_1) < b(v)/2 < b(v) - k/2$ since $b(v) 
  \ge k$. Therefore using Proposition \ref{p:2}, $\maxrank(M_{v_1}) 
  \le 2^{a(v_1)} \le 2^{b(v_1)} < 2^{b(v) - k/2}$. Therefore, 
  $\maxrank(M_v) \le 2^s \cdot 2^{b(v) - k/2} \le 2^{sd} \cdot |\Phi_v| 
  \cdot 2^{b(v) - k/2} ~.$
%
%
 \end{proof}

 Because of previous lemma, to prove that a full max-rank polynomial 
 cannot be computed by any $(s,d)$-product-sparse formula of 
 polynomial size, we only need to show that there exists a partition 
 that makes the formula $k$-weak with suitable values of $s, d$ and $k$.

 In \cite{Raz06}, it was proved that for syntactic multilinear formulas of 
 size at most $n^{\epsilon \log n}$, where $\epsilon$ is a small 
 enough universal constant, there exists such a partition that makes 
 the formula $k$-weak for $k = n^{1/8}$. We observe that this lemma 
 also holds for product-sparse formulas, the proof given in \cite{Raz06} 
 is not specific to just syntactic multilinear formulas and holds for any 
 arithmetic formula.
 We state the lemma again for the case of product-sparse formulas.
 
 \begin{lemma} \label{l:3}
 Let $n = 2m$. Let $\Phi$ be a $(s,d)$-product-sparse formula over the 
 set of variables $X = \{x_1, \ldots, x_n \}$, such that every variable 
 in $X$ appears in $\Phi$, and such that $|\Phi| \le n^{\epsilon 
 \log n}$, where $\epsilon$ is a small enough universal constant. Let 
 $A$ be a random partition of the variables in $X$ into $\{y_1, \ldots, 
 y_m\} \cup \{z_1, \ldots, z_m\}$. Then with probability 
 of at least $1 - n^{-\Omega(\log n)}$ the formula $\Phi^A$ is $k$-weak, 
 for $k = n^{1/8}$.
 \end{lemma}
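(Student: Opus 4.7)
The plan is to adapt Raz's \cite{Raz06} argument for syntactic multilinear formulas essentially verbatim. The crucial observation is that the statement of the lemma refers only to structural quantities of $\Phi$ --- the underlying DAG, the variable sets $X_v$, central paths, and the notion of $k$-unbalancedness --- and not to the polynomial computed at any gate. Whether a product gate is disjoint or $s$-sparse (non-disjoint) is irrelevant to any of these, so the $(s,d)$-product-sparse classification plays no role, and Raz's proof, which already works in the setting of general arithmetic formulas, transfers without change.

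The structure I would follow is: fix a node $v \in \Phi$ and a single central path $\gamma$ from some leaf up to $v$, bound $\Pr_A[\gamma \text{ is not } k\text{-unbalanced}]$, and then apply a union bound over the at most $|\Phi|^2 \le n^{2\epsilon\log n}$ such pairs $(v,\gamma)$. This delivers the claimed $1 - n^{-\Omega(\log n)}$ probability provided the per-path estimate is of the form $n^{-C\log n}$ for some $C > 2\epsilon$.

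To obtain such a per-path bound, I would use that along any central path $b(u)$ is monotone and at most doubles per step, so $\gamma$ passes through $\Theta(\log n)$ distinct doubling scales. At a scale with $b(u) = s$ large compared to $k^2$, a Chernoff/CLT estimate for the random partition $A$ gives $\Pr[u \text{ is not } k\text{-unbalanced}] = O(k/\sqrt{s})$. The main obstacle is correlation between these events for different nodes on $\gamma$: if $u$ is a descendant of $u'$ then $X_u \subseteq X_{u'}$, so the two partitions share randomness. Following Raz, I would resolve this by isolating at each chosen scale the fresh variables $X_{u'} \setminus X_u$ between successive chosen nodes; these sets are pairwise disjoint, their partitions under $A$ are mutually independent, and each has size comparable to the current scale. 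Multiplying the resulting independent per-scale bounds $O(k/\sqrt{s})$ across geometrically spaced scales $s \in [k^2, n]$ with $k = n^{1/8}$ yields a per-path bound of $2^{-\Theta(\log^2 n)} = n^{-\Theta(\log n)}$, which beats the $n^{2\epsilon \log n}$ union-bound factor once $\epsilon$ is chosen small enough. The only sanity check particular to product-sparse formulas is that the definitions of $X_v$, central path, and $k$-weakness are identical to Raz's, which is immediate from Section~\ref{sec:prelims}.
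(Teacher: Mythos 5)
Your proposal matches the paper's approach exactly: the paper's entire treatment of Lemma~\ref{l:3} consists of the observation that Raz's argument from \cite{Raz06} depends only on structural quantities of the formula (the tree, the variable sets $X_v$, the notions of central path and $k$-unbalancedness) and never invokes multilinearity of the gate polynomials, so it transfers verbatim to $(s,d)$-product-sparse formulas. Your additional sketch of Raz's per-path anticoncentration argument and the union bound over the $\le |\Phi|^2$ leaf-to-node pairs is a faithful summary of what \cite{Raz06} does, though the paper itself does not reproduce those details.
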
 
 
 With above lemma, the following theorem becomes obvious.
 
 \begin{theorem} \label{t:1}
 Let $X$ be a set of $2n$ variables and let $f \in \F[X]$ be a full 
 max-rank polynomial. Let $\Phi$ be any $(s,d)$-product-sparse formula 
 of size $n^{\epsilon \log n}$, where $\epsilon$ is the 
 same constant for which Lemma \ref{l:3} holds. If $sd = o(n^{1/8})$, 
 then $f$ cannot be computed by $\Phi$.
 \end{theorem}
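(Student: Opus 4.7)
The plan is to combine Lemmas~\ref{l:2} and~\ref{l:3} directly: use Lemma~\ref{l:3} to produce, with positive (indeed overwhelming) probability, a partition $A$ making the formula $k$-weak for $k = n^{1/8}$; then apply Lemma~\ref{l:2} at the root of $\Phi^A$ to upper bound the maxrank of the polynomial it computes; finally compare this against the full max-rank lower bound $2^n$ coming from the assumption on $f$.

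In more detail, suppose for contradiction that $\Phi$ computes $f$. Lemma~\ref{l:3} gives a bijective partition $A : X \to Y \cup Z$ with $|Y| = |Z| = n$ such that $\Phi^A$ is $k$-weak with $k = n^{1/8}$. Let $v$ denote the root of $\Phi^A$. Since every variable appears in $\Phi$, we have $|Y_v| = |Z_v| = n$, so $b(v) = n$, and the product-sparse depth is $d(v) \le d$. Applying Lemma~\ref{l:2} yields
\[
\maxrank(M_{\Phi^A}) \;\le\; 2^{s d} \cdot |\Phi| \cdot 2^{\,n - k/2} \;\le\; 2^{sd} \cdot n^{\epsilon \log n} \cdot 2^{\,n - n^{1/8}/2}.
\]
On the other hand, since $\Phi$ computes $f$ and $f$ is a full max-rank polynomial, we must have $\maxrank(M_{\Phi^A}) = \maxrank(M_{f^A}) = 2^n$.

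Combining the two and taking logarithms, we need
\[
\tfrac{1}{2} n^{1/8} \;\le\; sd \,+\, \epsilon \log^2 n.
\]
Under the hypothesis $sd = o(n^{1/8})$, the right-hand side is $o(n^{1/8}) + O(\log^2 n) = o(n^{1/8})$, which is strictly smaller than $\tfrac{1}{2} n^{1/8}$ for all sufficiently large $n$, yielding the required contradiction.

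There is essentially no obstacle here beyond bookkeeping: the heavy lifting is already done by Lemma~\ref{l:2} (the induction handling disjoint, sparse non-disjoint, and sum gates) and Lemma~\ref{l:3} (the probabilistic partition argument imported from~\cite{Raz06}). The only subtlety is making sure the bound $b(v) = n$ at the root is correct, which follows from the assumption that every variable of $X$ appears in $\Phi$ (we may assume this without loss of generality, since dummy gates can be added, or equivalently $f$ depending on fewer variables cannot be full max-rank on $2n$ variables). Thus the statement follows as an essentially immediate corollary of the two lemmas.
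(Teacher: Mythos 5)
Your proposal is correct and follows essentially the same route as the paper: assume for contradiction that $\Phi$ computes $f$, invoke Lemma~\ref{l:3} to obtain a partition making $\Phi^A$ $k$-weak for $k = n^{1/8}$, bound $\maxrank(M_{\Phi^A})$ at the root via Lemma~\ref{l:2}, and compare against $2^n$. The only difference is that you spell out the final logarithmic comparison, which the paper leaves implicit; both are equally valid.
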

 \begin{proof}
 Assume for a contradiction that $\Phi$ computes $f$. 
 Using Lemma \ref{l:3}, for a random partition $A$, with 
 probability of at least $1 - n^{-\Omega(\log n)}$, the formula $\Phi^A$ 
 is $k$-weak for $k = n^{1/8}$. Hence, using Lemma \ref{l:2},
 $\maxrank(M_{\Phi^A}) \le 2^{sd} \cdot |\Phi^A| \cdot 2^{n - k/2} < 2^n.$ Since $f$ is a full max-rank polynomial, it cannot be computed by $\Phi$.
 \end{proof}
 
\subsection*{Preprocessed Product-sparse Formulas}
To prove the results about preprocessed product-sparse formulas, we observe first that the Lemma \ref{l:2} also holds for preprocessed product-sparse formulas.

 \begin{lemma} \label{l:4}
 Let $\Phi$ be a preprocessed $(s,d)$-product-sparse formula over the 
 set of variables $\{y_1, \dots, y_n\}$ and $\{z_1, \dots, z_n\}$, and 
 let $v$ be a node in $\Phi$. If $v$ is $k$-weak, then,
 \[ \maxrank(M_v) \le 2^{s \cdot d(v)} \cdot |\Phi_v| \cdot 2^{b(v)-k/2}. \]
 where $d(v)$ is the product-sparse depth of $v$.
 \end{lemma}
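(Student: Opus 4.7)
The plan is to run the same induction on $|\Phi_v|$ as in Lemma~\ref{l:2}, verifying that the preprocessing step does not break any of its cases. At the leaves, an input gate now holds a non-constant univariate polynomial $T(x_i)$ rather than $x_i$ itself; however, after the partition $A$ is applied, the variable $x_i$ is assigned to a single side ($Y$ or $Z$), so $T(x_i)$ lies entirely in $\F[Y]$ or entirely in $\F[Z]$, and Proposition~\ref{p:2} immediately gives $\maxrank(M_{T(x_i)}) \le 1$. The sum-gate step uses only Proposition~\ref{p:3}, and the disjoint product step uses only Proposition~\ref{p:4}; both propositions concern algebraic operations on arbitrary polynomials in $\F[Y,Z]$, so they are insensitive to whether the leaves were preprocessed.

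The only inductive case that needs a second look is the non-disjoint $s$-sparse product gate $v = v_1 \cdot v_2$, where the proof in Lemma~\ref{l:2} invoked Corollary~\ref{c:3} to bound $\maxrank(M_v) \le 2^s \cdot \maxrank(M_{v_1})$ by using that the sparse child $v_2$ has at most $2^s$ monomials. For preprocessed formulas, the polynomial at $v_2$ is no longer $2^s$-sparse: each original monomial $x_{i_1}^{a_1}\cdots x_{i_t}^{a_t}$ becomes $T_{i_1}(x_{i_1})^{a_1}\cdots T_{i_t}(x_{i_t})^{a_t}$, which can have many more monomials. The key observation, however, is that under the partition $A$ each variable $x_{i_j}$ is assigned wholly to $Y$ or wholly to $Z$, so each preprocessed monomial factors cleanly as $g_j(Y)\cdot h_j(Z)$ with $g_j \in \F[Y]$ and $h_j \in \F[Z]$. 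Thus the polynomial at $v_2$, after preprocessing, can be written as $\sum_{j=1}^r g_j h_j$ with $r \le 2^s$, and Corollary~\ref{c:2} gives exactly the same inequality $\maxrank(M_v) \le 2^s \cdot \maxrank(M_{v_1})$ used in the original argument.

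Once this substitution of Corollary~\ref{c:2} for Corollary~\ref{c:3} is made, the remainder of the non-disjoint product analysis goes through verbatim: the case split on whether $b(v) \le 2b(v_1)$ (invoking the induction hypothesis) or $b(v) > 2b(v_1)$ (using Proposition~\ref{p:2} to get $\maxrank(M_{v_1}) \le 2^{b(v_1)} < 2^{b(v) - k/2}$) produces the same bound $\maxrank(M_v) \le 2^{s \cdot d(v)} \cdot |\Phi_v| \cdot 2^{b(v) - k/2}$. The main conceptual point, and the only one requiring thought, is recognizing that although preprocessing can dramatically increase the monomial count, it preserves the $Y$-versus-$Z$ factorizability of each original monomial, which is the structural property actually used in the maxrank bound. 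Everything else is direct transcription of the proof of Lemma~\ref{l:2}.
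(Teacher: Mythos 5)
Your proposal matches the paper's proof: it runs the same induction as Lemma~\ref{l:2}, handles the base case by noting $\maxrank(M_{T(x)}) \le 1$ for univariate $T$, and in the $s$-sparse product case replaces Corollary~\ref{c:3} with Corollary~\ref{c:2} by observing that each preprocessed monomial still factors as $g_j(Y)\cdot h_j(Z)$. Your explanation of why Corollary~\ref{c:2} applies is the same key observation the paper relies on, just spelled out more explicitly.
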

 \begin{proof}
 The proof proceed in the same way by induction on $|\Phi_v|$. We only 
 point out the differences in the two proofs. Base case will hold 
 similarly as for any univariate polynomial $T(x)$, $\maxrank(M_{T(x)})$ 
 is at most one. In the induction step, only difference will be in Case $(3)$ 
 in which $v$ is a product gate obeying $s$-sparse property. In lemma 
 \ref{l:2}, we obtain the following inequality using Corollary \ref{c:3},
 \[ \maxrank(M_v) \le 2^s \cdot \maxrank(M_{v_1}) ~. \]
 We obtain the same inequality for preprocessed product-sparse formulas also 
 using Corollary \ref{c:2} instead of Corollary \ref{c:3} and rest of the 
 proof follows similarly.
 \end{proof}
 
 We also observe that if a product-sparse formula $\Phi$ is $k$-weak, 
 then any preprocessed product-sparse formula obtained from 
 $\Phi$ by applying a preprocessing step is also $k$-weak.
 
 \begin{lemma} \label{l:5}
 Let $\Phi$ be a product-sparse formula over the set of variables 
 $Y \cup Z$. Let $\Phi'$ be any preprocessed product-sparse 
 formula obtained from $\Phi$. If $\Phi$ is $k$-weak, then $\Phi'$ is 
 also $k$-weak. 
 \end{lemma}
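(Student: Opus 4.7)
The plan is to exploit the fact that preprocessing preserves the variable set at every node that survives from $\Phi$ into $\Phi'$, and then to show that any central path in $\Phi'$ restricts to a central path in $\Phi$ which, by hypothesis, must contain a $k$-unbalanced node.

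First I would record the key invariance: when each input gate labelled by $x_i$ in $\Phi$ is replaced by a subformula computing $T(x_i)$, every new gate introduced by the preprocessing involves only the single variable $x_i$. Consequently, for every internal node $v$ of $\Phi$ that is also present in $\Phi'$, the sets $Y_v$ and $Z_v$ are identical in the two formulas, so $a(v)$ and $b(v)$ are unchanged. In addition, every gate $u$ that lies strictly inside a preprocessing subformula satisfies $|Y_u|+|Z_u|=1$, hence $b(u)=1/2$ and $a(u)=0$.

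Next, fix any central path $\gamma'$ in $\Phi'$ from a leaf to the root. Its leaf lies inside the preprocessing subformula of some input gate $x_i$ of $\Phi$. Let $r_i$ be the root of this subformula; it occupies the same position in $\Phi'$ as $x_i$ did in $\Phi$, and by the observation above $b(r_i)=1/2=b(x_i)$. The portion of $\gamma'$ from $r_i$ to the root of $\Phi'$ traces a path $\gamma$ in $\Phi$ from the leaf $x_i$ to the root. Since $a$ and $b$ agree on all these nodes across the two formulas, the centrality inequality $b(u)\le 2 b(u_1)$ that holds along $\gamma'$ immediately gives centrality of $\gamma$ in $\Phi$; the only edge needing a separate check is the first edge of $\gamma$ out of $x_i$, but that corresponds to the edge of $\gamma'$ out of $r_i$, and the two edges have identical $b$-values at both endpoints.

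Finally, since $\Phi$ is $k$-weak, the central path $\gamma$ is $k$-unbalanced, so it contains a node $v$ with $b(v)-a(v)\ge k$. This $v$ is a node of both $\Phi$ and $\Phi'$ with identical $a(v),b(v)$, hence $k$-unbalanced in $\Phi'$ as well, and $v$ lies on $\gamma'$. Therefore $\gamma'$ is $k$-unbalanced, and since $\gamma'$ was an arbitrary central path to the root of $\Phi'$, it follows that $\Phi'$ is $k$-weak. The argument is essentially bookkeeping; the only delicate point is the transition between the preprocessing subformula and the original formula, and this is handled by the observation $b(r_i)=b(x_i)$, so I do not foresee a substantive obstacle.
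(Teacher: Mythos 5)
Your proof is correct and uses the same central idea as the paper: preprocessing does not change the variable set $Y_v \cup Z_v$ at any node, hence $a(v)$ and $b(v)$ are preserved, so centrality and $k$-unbalance carry over node-by-node. The one place where you diverge is in your reading of the model: you assume each input gate labelled $x_i$ is replaced by a \emph{subformula} computing $T(x_i)$, with fresh internal gates. The paper's definition is simpler --- the input gate is replaced by a single gate labelled by the univariate polynomial $T(x_i)$ --- so $\Phi'$ has exactly the same underlying DAG as $\Phi$, there is a bijection between nodes, and the claim is immediate once one notes $Y_{v'}=Y_v$ and $Z_{v'}=Z_v$. Your extra bookkeeping (the observation that every new node $u$ inside a preprocessing subformula has $b(u)=1/2$ and the centrality condition is vacuous there, plus the ``boundary'' check at the root $r_i$ of the subformula) is sound and would indeed be needed if preprocessing inserted nontrivial subformulas; it proves a strictly harder statement than the lemma requires. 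In short: no gap, same approach as the paper, but you are working in a slightly more general model than the one the paper actually defines, and in the paper's model the argument shrinks to the observation that $\Phi$ and $\Phi'$ share the same graph with identical $a,b$ values everywhere.
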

 \begin{proof}
 For every node $v$ in $\Phi$, there is a corresponding node $v'$ in 
 $\Phi'$. To prove the lemma we need to show that every central path 
 in $\Phi'$ is $k$-weak. Since in the preprocessing step, each variable is 
 replaced by a non-constant univariate polynomial in the same variable, 
 we know that $Y_{v'} = Y_v$ and  $Z_{v'} = Z_v$. Thus, $a(v') = a(v)$ 
 and $b(v') = b(v)$. Hence, every central path in $\Phi'$ is a central 
 path in $\Phi$ and vice-versa. Also, $v$ is $k$-unbalanced in $\Phi$ 
 if{f} $v'$ is also $k$-unbalanced in $\Phi'$. Hence, if $\Phi$ is $k$-weak 
 then $\Phi'$ is also $k$-weak.
 \end{proof}
 
 By using Lemma~\ref{l:4}, in a very similar way to the proof of Theorem~\ref{t:1}, we get the following:
 \begin{theorem} \label{t:2}
 Let $X$ be a set of $2n$ variables and let $f \in \F[X]$ be a full 
 max-rank polynomial. Let $\Phi$ be any preprocessed $(s,d)$-product-sparse formula 
 of size $n^{\epsilon \log n}$, where $\epsilon$ is the 
 same constant for which Lemma \ref{l:3} holds. If $sd = o(n^{1/8})$, 
 then $f$ cannot be computed by $\Phi$.
 \end{theorem}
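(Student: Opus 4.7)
The plan is to mimic the proof of Theorem~\ref{t:1} almost verbatim, substituting Lemma~\ref{l:4} for Lemma~\ref{l:2} and using Lemma~\ref{l:5} to transport the $k$-weakness property from the underlying product-sparse formula to its preprocessed version. So I would proceed by contradiction and assume that $\Phi$ computes $f$. By definition, $\Phi$ is obtained by a preprocessing step from some underlying $(s,d)$-product-sparse formula $\Phi_0$ of the same size, where each leaf labelled $x_i$ in $\Phi_0$ has been replaced by a non-constant univariate polynomial $T(x_i)$.

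Next, I would invoke Lemma~\ref{l:3} on $\Phi_0$: since $|\Phi_0| \le n^{\epsilon \log n}$, a uniformly random partition $A$ of $X$ into $Y\cup Z$ with $|Y|=|Z|=n$ makes $\Phi_0^A$ be $k$-weak for $k = n^{1/8}$ with probability at least $1 - n^{-\Omega(\log n)}$. Fix any such partition $A$. By Lemma~\ref{l:5}, the preprocessed formula $\Phi^A$ is then also $k$-weak for the same $k$, because the preprocessing step does not change $Y_v$, $Z_v$, or the set of central paths at any node $v$.

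Applying Lemma~\ref{l:4} to the root of $\Phi^A$, which has product-sparse depth at most $d$ and balance parameter $b = n$, we obtain
\[
\maxrank(M_{\Phi^A}) \;\le\; 2^{sd}\cdot |\Phi^A|\cdot 2^{\,n-k/2} \;=\; 2^{sd} \cdot n^{\epsilon \log n} \cdot 2^{\,n - n^{1/8}/2}.
\]
Since $sd = o(n^{1/8})$ and $n^{\epsilon \log n} = 2^{\epsilon (\log n)^2}$, the $2^{n^{1/8}/2}$ factor in the denominator dominates both $2^{sd}$ and the formula-size term for all sufficiently large $n$, so the right-hand side is strictly less than $2^n$. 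However, $f$ is full max-rank, so for the partition $A$ we have $\maxrank(M_{f^A}) = 2^n$, and hence $\maxrank(M_{\Phi^A}) = 2^n$ as well. This contradiction completes the argument.

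Since Lemmas~\ref{l:3}, \ref{l:4}, and \ref{l:5} are already established, there is no genuine obstacle; the only thing one must check carefully is that the arithmetic of the bound really does go through with the preprocessing, i.e.\ that replacing leaves by univariate polynomials of arbitrarily high degree does not invalidate any step. This is exactly what Lemmas~\ref{l:4} and \ref{l:5} were set up to guarantee (the former via Corollary~\ref{c:2} in place of Corollary~\ref{c:3} in the sparse-product case, the latter via the observation $Y_{v'}=Y_v$, $Z_{v'}=Z_v$), so the proof reduces to a routine substitution into the Theorem~\ref{t:1} argument.
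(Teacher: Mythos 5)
Your proof is correct and follows essentially the same route as the paper: contradiction, Lemma~\ref{l:3} to obtain a $k$-weakening partition for the underlying product-sparse formula, Lemma~\ref{l:5} to transfer $k$-weakness to the preprocessed formula, Lemma~\ref{l:4} to bound $\maxrank(M_{\Phi^A})$, and then comparison against $2^n$ using the full max-rank hypothesis. You merely spell out a bit more explicitly the arithmetic showing $2^{sd}\cdot n^{\epsilon\log n}\cdot 2^{n-k/2} < 2^n$, which the paper leaves implicit.
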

 \begin{proof}
 Assume for a contradiction that $\Phi$ computes $f$. 
 Using Lemma \ref{l:3} and \ref{l:5}, for a random partition $A$, with 
 probability of at least $1 - n^{-\Omega(\log n)}$, the formula $\Phi^A$ 
 is $k$-weak for $k = n^{1/8}$. Hence, using Lemma \ref{l:4},
 \[ \maxrank(M_{\Phi^A}) \le 2^{sd} \cdot |\Phi^A| \cdot 2^{n - k/2} < 2^n. \]
 Since $f$ is a full max-rank polynomial, it cannot be computed by $\Phi$.
 \end{proof} 

 
\section{Lower Bounds against Partitioned Arithmetic Branching Programs}
\label{sec:partabp}

In the preliminaries section, we defined partitioned arithmetic branching programs which are a generalization of ordered ABPs. While ordered ABP can only compute multilinear polynomials, partitioned ABP is a non-multilinear model, thus can compute non-multilinear polynomials also. We, then, prove an exponential lower bound for partitioned ABPs.
 
By definition, any polynomial computed by a partitioned ABP is homogenous. 
In \cite{J08}, a full rank homogenous polynomial was constructed. Thus, 
to prove lower bounds for partitioned ABP, we only need to upper bound 
the $\maxrank$ of the polynomial coefficient matrix for any polynomial 
being computed by a partitioned ABP. Now we prove such an upper bound and use it to prove exponential lower bound on the size of partitioned ABPs computing any full max-rank homogenous polynomial.

\begin{theorem}
 Let $X$ be a set of $2n$ variables and $\F$ be a field. For any full max-rank 
 homogenous polynomial $f$ of degree $n$ over $X$ and $\F$, the size of any 
 partitioned ABP computing $f$ must be $2^{\Omega(n)}$.
\end{theorem}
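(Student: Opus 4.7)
The plan is to mimic the structure of the homogeneous depth-$3$ argument (Theorem~\ref{t:3}): upper bound $\maxrank$ of any polynomial computed by a partitioned ABP using the partition built into the model, and contrast this with the lower bound $2^n$ coming from the full max-rank hypothesis.

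Let $B = (G, w, s, t)$ be a partitioned ABP computing $f$, and let $\pi$ and $i = 2\alpha n$ witness its being $\pi$-partitioned. First, I would choose the variable partition $A : X \to Y \cup Z$ dictated by $\pi$: set $A(x_{\pi(j)}) = y_j$ for $j \in [1,n]$ and $A(x_{\pi(j)}) = z_{j-n}$ for $j \in [n+1, 2n]$. Since the ABP is homogeneous of degree $n$, the polynomial decomposes as
\[
f^A \;=\; \sum_{v \in L_i} f^A_{s,v} \cdot f^A_{v,t},
\]
so by subadditivity (Proposition~\ref{p:3}) it suffices to bound $\maxrank(M_{f^A_{s,v} \cdot f^A_{v,t}})$ for each $v \in L_i$.

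Next I would exploit the two cases in the definition of $\pi$-partitioned. In the first case $X_{s,v} \subseteq \{x_{\pi(1)}, \dots, x_{\pi(n)}\}$, so $f^A_{s,v} \in \F[Y]$, and simultaneously $|X_{v,t}| \le 2n(1-\alpha)$. Writing $X_{v,t}$ as a disjoint union of its $Y$-part and its $Z$-part under $A$, at least one side has size at most $n(1-\alpha)$, so Proposition~\ref{p:2} gives $\maxrank(M_{f^A_{v,t}}) \le 2^{n(1-\alpha)}$. Then, since $f^A_{s,v}$ is a polynomial purely in $Y$, Proposition~\ref{p:5} yields
\[
\maxrank(M_{f^A_{s,v} \cdot f^A_{v,t}}) \;\le\; \maxrank(M_{f^A_{v,t}}) \;\le\; 2^{n(1-\alpha)}.
\]
The second case ($X_{v,t}$ contained in the $Z$-side of $\pi$) is completely symmetric, bounding $\maxrank(M_{f^A_{s,v}}) \le 2^{n(1-\alpha)}$ and then applying Proposition~\ref{p:5} with $g \in \F[Z]$.

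Summing over $v \in L_i$ and invoking Proposition~\ref{p:3} we obtain
\[
\maxrank(M_{f^A}) \;\le\; |L_i| \cdot 2^{n(1-\alpha)} \;\le\; |B| \cdot 2^{n(1-\alpha)}.
\]
On the other hand, since $f$ is full max-rank, $\maxrank(M_{f^A}) = 2^n$ for this (or any) partition, giving $|B| \ge 2^{n\alpha} = 2^{\Omega(n)}$, as $\alpha$ is a positive constant. The only real subtlety is the case split — making sure that in either branch one of the two factors $f^A_{s,v}$, $f^A_{v,t}$ sits entirely on one side of the partition (so Proposition~\ref{p:5} applies) while the other has a side whose size is at most $n(1-\alpha)$ (so Proposition~\ref{p:2} applies). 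Once that bookkeeping is set up, the rest is subadditivity of $\maxrank$ and the full max-rank hypothesis.
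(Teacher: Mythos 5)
Your proposal matches the paper's own proof essentially step for step: choose the partition $A$ induced by $\pi$, decompose $f^A = \sum_{v \in L_i} f^A_{s,v} f^A_{v,t}$, bound each summand by $2^{n(1-\alpha)}$ via the two cases using Propositions~\ref{p:2} and~\ref{p:5}, then conclude $|L_i| \ge 2^{\alpha n}$ from subadditivity and the full max-rank hypothesis. The only cosmetic difference is that you spell out the ``$\min$ of the two sides of $X_{v,t}$ is at most $|X_{v,t}|/2$'' step that the paper leaves implicit.
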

 \begin{proof}
 Let $B$ be a $\pi$-partitioned ABP computing $f$ for a permutation $\pi : [2n] 
 \rightarrow [2n]$. Let $L_0, L_1, \dots, L_n$ be the levels of $B$. Consider 
 any partition $A$ that assigns all $n$ $y$-variables to $\{x_{\pi(1)}, 
 x_{\pi(2)}, \dots, x_{\pi(n)} \}$ and all $n$ $z$-variables to $\{x_{\pi(n+1)}, 
 x_{\pi(n+2)}, \dots, x_{\pi(2n)} \}$. Let us denote by $f^A$ the polynomial 
 obtained from $f$ after substituting each variable $x$ by $A(x)$. Let $B$ is 
 partitioned with respect to the level $L_i$ for $i = 2\alpha n$. We can write, 
 $ f = f_{st} = \sum_{v \in L_i} f_{s,v} f_{v,t} ~. $
 Consider a node $v \in L_i$. By definition, there are following two cases:\\
 {\bf Case 1:} $X_{s,v} \subseteq \{x_{\pi(1)}, x_{\pi(2)}, \dots, x_{\pi(n)} \}$
 and $|X_{v,t}| \le 2n(1 - \alpha)$. Thus, $f_{s,v}^A \in \F[Y]$. Hence, using 
 Proposition \ref{p:5},
\[ \maxrank(M_{f_{s,v}^A f_{v,t}^A})  ~\le ~\maxrank(M_{f_{v,t}^A}) 
 ~\le ~2^{|X_{v,t}|/2} ~ \le ~ 2^{n(1 - \alpha)} \]
 {\bf Case 2:} $X_{v,t} \subseteq \{x_{\pi(n+1)}, x_{\pi(n+2)}, \dots, x_{\pi(2n)} 
 \}$ and $|X_{s,v}| \le 2n(1- \alpha)$. Thus, $f_{v,t}^A \in \F[Z]$. Hence, again 
 using Proposition \ref{p:5},
\[ \maxrank(M_{f_{s,v}^A f_{v,t}^A}) ~\le ~\maxrank(M_{f_{s,v}^A}) 
 ~\le ~2^{|X_{s,v}|/2}  ~ \le ~ 2^{n(1 - \alpha)} \]
 Thus, in any case, $\maxrank(M_{f_{s,v}^A f_{v,t}^A}) \le 2^{n(1 - \alpha)}$ for 
 all $v \in L_i$. Using Proposition \ref{p:3},
 $ \maxrank(M_{f^A}) \le |L_i| \cdot 2^{n(1 - \alpha)}. $
 Since $f$ is a full max-rank polynomial, we get $|L_i| \ge 2^{\alpha n}$. 
 \end{proof}

\section{Acknowledgements}
The authors thank the anonymous referees for suggesting a simplified view of the proof for Lemma~\ref{l:8}. 
The first author thanks Shubhangi Saraf and Venkata Koppula for some insightful discussions related to Section~\ref{sec:lowrank}.
\bibliography{refs}
\newpage
\appendix

 \section{An alternative Proof of Lemma~\ref{l:8} }
 \label{app:l-8}
 We view the product of $d$ matrices as an iterative process using the associative property of the matrix multiplication and prove a more stronger
 statement than Lemma \ref{l:8}. 
 We will first prove it for the case when $d$ is an even integer.
 Let us define a partition $B : X \rightarrow Y \cup Z$ as follows: all the variables in odd 
 numbered matrices are assigned variables in $Y$ and all the variables in even numbered 
 matrices are assigned variables in $Z$. Let us denote the variable assigned by $B$ to 
 $x^{2i - 1}_{jk}$  by $y^{2i - 1}_{jk}$ and the variable assigned to $x^{2i}_{jk}$ by 
 $z^{2i}_{jk}$.
 
 \begin{lemma} \label{l:11}
 Let $d = 2d'$ be an even integer. Let us denote the polynomial computed at $(i,j)^{th}$ entry of 
 the product $\prod\limits_{k \in [d]} A^k$ by $f_{ij}$. The following statements hold true:
 \begin{enumerate}
 \item For any $i, j \in [n]$, $\rank(M_{f_{ij}^B}) = n^{d-1}$.
 \item For any $i \in [n]$ and $j \neq j' \in [n]$, the set of non-zero columns in 
       $M_{f_{ij}^B}$ and $M_{f_{ij'}^B}$ are disjoint.
 \end{enumerate}
 \end{lemma}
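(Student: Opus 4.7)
The plan is to proceed by induction on $d' = d/2 \ge 1$, proving both statements of the lemma simultaneously; the strengthened form is essential because part (2) will feed the inductive step for part (1). For the base case $d = 2$, the polynomial $f_{ij}^B = \sum_{k \in [n]} y^1_{i,k}\, z^2_{k,j}$ is multilinear, so $M_{f_{ij}^B}$ is its partial derivatives matrix: its nonzero entries sit at positions $(y^1_{i,k},\, z^2_{k,j})$ for $k \in [n]$, one per row and one per column, so the rank is exactly $n = n^{d-1}$. For (2), every nonzero column of $M_{f_{ij}^B}$ is indexed by some $z^2_{k,j}$, and varying $j$ changes this label, so distinct $j$'s give disjoint column index sets.

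For the inductive step, I use associativity to write $f'_{ij} = \sum_{k,l \in [n]} f_{ik} \cdot x^{d+1}_{k,l} \cdot x^{d+2}_{l,j}$, where $f_{ik}$ is the $(i,k)$-entry of the product of the first $d$ matrices. Applying $B$ gives
\[
 (f'_{ij})^B \;=\; \sum_{k,l \in [n]} f_{ik}^B \cdot y^{d+1}_{k,l} \cdot z^{d+2}_{l,j}.
\]
The key structural observation is that the variables $y^{d+1}_{*,*}$ and $z^{d+2}_{*,*}$ do not occur in $f_{ik}^B$. Hence multiplying a nonzero entry at position $(p,q)$ of $M_{f_{ik}^B}$ by $y^{d+1}_{k,l}\, z^{d+2}_{l,j}$ produces a nonzero entry at position $(p\cdot y^{d+1}_{k,l},\, q\cdot z^{d+2}_{l,j})$ of $M_{(f'_{ij})^B}$; in other words, the $(k,l)$-summand contributes a ``shifted'' block of rank exactly $n^{d-1}$ (by the inductive hypothesis) to $M_{(f'_{ij})^B}$.

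The heart of the argument is to show that these $n^2$ blocks occupy pairwise disjoint rows and pairwise disjoint columns, so their ranks simply add up to $n^2 \cdot n^{d-1} = n^{d+1}$. Row disjointness across distinct pairs $(k,l) \neq (k',l')$ is immediate: the variable $y^{d+1}_{k,l}$ divides every row index of the $(k,l)$-block and $y^{d+1}_{k,l} \neq y^{d+1}_{k',l'}$. For column disjointness, the case $l \neq l'$ is handled the same way using $z^{d+2}_{l,j} \neq z^{d+2}_{l',j}$. The delicate case is $l = l'$ but $k \neq k'$: both blocks' columns are divisible by the same variable $z^{d+2}_{l,j}$, so column disjointness reduces to showing that the nonzero columns of $M_{f_{ik}^B}$ and $M_{f_{ik'}^B}$ are disjoint. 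This is exactly part (2) of the inductive hypothesis. Finally, part (2) for degree $d+2$ is immediate: every nonzero column of $M_{(f'_{ij})^B}$ is indexed by a $Z$-monomial that contains exactly one variable of the form $z^{d+2}_{l,j}$, so changing $j$ to $j'$ changes that factor and the column sets are disjoint.

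The principal obstacle is precisely the $l=l'$, $k \neq k'$ sub-case of column disjointness, and it is the whole reason the lemma must be strengthened to carry part (2) alongside part (1) through the induction. Without the disjoint-columns invariant one could only bound $\rank(M_{(f'_{ij})^B})$ by a subadditive sum of block ranks, which is loose and would fail to match the lower bound needed for the $n^{d+1}$ count.
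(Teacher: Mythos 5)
Your proof is correct and follows essentially the same route as the paper's: both strengthen the induction with the disjoint-columns invariant (part 2), peel off two matrices at a time, view $M_{(f'_{ij})^B}$ as a sum of shifted copies of lower-degree blocks, and use part (2) of the inductive hypothesis exactly where you identify the ``delicate case'' ($l = l'$, $k \neq k'$) to get column disjointness and hence rank additivity. The only cosmetic difference is that the paper first groups the $n^2$ summands by the outer index into intermediate polynomials $P_k = \sum_r g_{ik}^B\, y^{d-1}_{kr}\, z^{d}_{rj}$, shows $\rank(M_{P_k}) = n^{d-2}$, and then argues the $M_{P_k}$ have pairwise disjoint columns, whereas you treat all $n^2$ shifted blocks at once; both lead to the same count.
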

 \begin{proof}
 We will prove by induction on $d'$. \\
 {\bf Base Case:} $d' = 1$ i.e. $d = 2$. In this case, we have two matrices $A^1$ and $A^2$ and 
 the partition $B$ assigns all the variables in $A^1$ to $Y$ and all the variables in $A^2$ to $Z$.
 For any $i, j$, $f_{ij} = (A^1A^2)(i, j) = \sum\limits_{k \in [n]} A^1(i,k) A^2(k,j)$. Thus, $f_{ij}^B = 
 \sum\limits_{k \in [n]} y^1_{ik} z^2_{kj}$. Clearly, $\rank(M_{f_{ij}^B}) = n$. Moreover, there 
 are only $n$ non-zero columns in $M_{f_{ij}^B}$ which are indexed by $z^2_{kj}$ respectively.
 Thus, for $j \neq j'$, the set of non-zero columns in $M_{f_{ij}^B}$ and $M_{f_{ij'}^B}$ are disjoint. \\
 {\bf Induction Step:} Let us suppose the lemma holds for $(d' - 1)$. We will prove that it 
 also holds for $d'$. Let $d = 2d'$, $Q = \prod\limits_{k \in [d - 2]} A^k$ and $P = A^{d-1}A^{d}$.
 Therefore, $\prod\limits_{k \in [d]} A^k = QP$. Thus, $f_{ij} = (QP)(i, j) = 
 \sum\limits_{k \in [n]} Q(i,k)P(k,j)$ where $P(k,j)$ can again be written as 
 $\sum\limits_{r \in [n]} A^{d-1}(k,r) A^{d}(r,j)$. Thus, 
 \[ f_{ij} = \sum\limits_{k \in [n]} \sum\limits_{r \in [n]} Q(i,k) ~ A^{d-1}(k,r) ~ A^{d}(r,j). \] 
 Let us denote the polynomial $Q(i, k)$ by $g_{ik}$ and 
 $\sum\limits_{r \in [n]} g_{ik}^B ~ y^{d-1}_{kr} ~ z^{d}_{rj}$ by $P_k$. Thus, 
 \[ f_{ij}^B = \sum\limits_{k \in [n]} \sum\limits_{r \in [n]} g_{ik}^B ~ y^{d-1}_{kr} ~ z^{d}_{rj} 
    = \sum\limits_{k \in [n]} P_k ~. \]
 
 By induction hypothesis, $\rank(M_{g_{ik}^B}) = n^{d-3}$. Thus, $M_{g_{ik}^B}$ has a sub-matrix of 
 size $n^{d-3} \times n^{d-3}$ which is of full rank. Since the variables $y^{d-1}_{kr}$ and $z^{d}_{rj}$ 
 do not appear in $g_{ik}^B$, the matrix $M_{g_{ik}^B ~ y^{d-1}_{kr} ~ z^{d}_{rj}}$ contains the full 
 rank submatrix of $M_{g_{ik}^B}$ at a shifted position in both rows and columns depending on 
 $y^{d-1}_{kr}$ and $z^{d}_{rj}$. 
 Formally, for any monomials $p, q$ in the set of variables $Y$ and $Z$ respectively,  
 $M_{g_{ik}^B ~ y^{d-1}_{kr} ~ z^{d}_{rj}}(y^{d-1}_{kr}p, ~ z^{d}_{rj}q) = 1$ if{f}
 $M_{g_{ik}^B}(p, q) = 1$.
 Thus, $M_{P_k}$ contains $n$ disjoint copies 
 of the full rank sub-matrix of $M_{g_{ik}^B}$ such that no two of the copies contain any common 
 non-zero row or column. Thus, $\rank(M_{P_k}) = n \cdot \rank(M_{g_{ik}^B}) 
 = n^{d-2}$.
 
 By induction hypothesis, we also know that the set of non-zero columns in $M_{g_{ik}^B}$ and 
 $M_{g_{i,k'}^B}$ are disjoint for $k \neq k'$. Thus, for $k \neq k'$, $M_{P_k}$ and $M_{P_{k'}}$ 
 do not contain any common non-zero column. Hence, 
 \[ \rank(M_{f_{ij}^B}) = \sum\limits_{k \in [n]}\rank(M_{P_k}) = n^{d-1} ~. \]
 To prove statement $2$, it is sufficient to observe that each non-zero 
 entry in $M_{f_{ij}^B}$ is present in a column such that the monomial indexing it is divisible by 
 some variable in the $j^{th}$ column of $A^d$ and is not divisible by any variable present in 
 other columns of $A^d$. Thus, for $j \neq j'$, the set of non-zero columns in $M_{f_{ij}^B}$ and 
 $M_{f_{ij'}^B}$ are disjoint.
 \end{proof} 
 
 For the case when $d$ is a odd integer, let us denote the polynomial $\prod_{k \in [d-1]}A^k(i,j)$ 
 by $f_{ij}$ and $\prod_{k \in [d]}A^k(i,j)$ by $g_{ij}$ for all $i, j \in [n]$. Thus, $g_{ij} = 
 \sum_{k \in [n]} f_{ik}x^{d}_{kj}$ which implies $g_{ij}^B = \sum_{k \in [n]} f_{ik}^By^{d}_{kj}$.
 Thus, $M_{g_{ij}^B}$ contains a copy of $M_{f_{ik}^B}$ for each $k$ and none of these copies have 
 a common non-zero row due to multiplication by $y^{kj}$. Moreover, we know that $M_{f_{ik}^B}$ and
 $M_{f_{ik'}^B}$ do not have any common non-zero columns.
 Thus, $M_{g_{ij}^B}$ contains a copy of the full rank sub-matrix of $M_{f_{ik}^B}$ for each $k$ with no common non-zero rows or columns. Hence, 
 $ \rank(M_{g_{ij}^B}) = \sum\limits_{k \in [n]} \rank(M_{f_{ik}^B}) = n^{d-1} ~.$
  
 \section{Complete Proof of Lemma \ref{l:2}}
 \label{app:repeat-proof}
 \begin{lemma-a}
 Let $\Phi$ be a $(s,d)$-product-sparse formula over the 
 set of variables $\{y_1, \dots, y_m\}$ and $\{z_1, \dots, z_m\}$, and 
 let $v$ be a node in $\Phi$. Denote the product-sparse depth of $v$ by $d(v)$. If $v$ is $k$-weak, then,
 $\maxrank(M_v) \le 2^{s \cdot d(v)} \cdot |\Phi_v| \cdot 2^{b(v)-k/2} ~.$ 
 \end{lemma-a}
 \begin{proof}
 We will prove by induction on $|\Phi_v|$. \\
 {\bf Base Case:} $v$ is a leaf node. By definition, the polynomial 
 computed at node $v$ is either a constant from the field or a 
 single variable. Thus, $\maxrank(M_v) \le 1$. Thus the lemma follows. \\
 {\bf Inductive Step:} Let $v$ be a node in $\Phi$ and assume that the 
 lemma holds for all nodes $u$ in $\Phi$ such that $|\Phi_u| < |\Phi_v|$. 
 We consider the following cases:
 \begin{enumerate}
  \item $v$ is a $k$-unbalanced node. 
  
  Thus, $a(v) \le b(v) - k$. 
  Hence, using Proposition \ref{p:2}, $\maxrank(M_v) \le 2^{b(v) - k} 
  \le 2^{sd(v)} \cdot |\Phi_v| \cdot 2^{b(v) - k/2}$. In the rest of the cases, we can 
  assume that $v$ is not $k$-unbalanced.
  
  \item $v$ is a disjoint product gate with children $v_1$ and $v_2$ and $v$.
  
  Thus, we have 
  $Y_{v_1} \cap Y_{v_2} = \phi$ and $Z_{v_1} \cap Z_{v_2} = \phi$. Thus, 
  $b(v) = b(v_1) + b(v_2)$. Without loss of generality, we can assume 
  that $b(v) \le 2b(v_1)$. Since $v$ is not $k$-unbalanced, every central 
  path that reaches $v_1$ must be $k$-unbalanced as otherwise we can 
  extend such a path to $v$ that will remain central and not $k$-unbalanced. 
  Thus $v_1$ is also $k$-weak and we have $|\Phi_{v_1}| < 
  |\Phi_v|$. Hence by induction hypothesis, 
  $ \maxrank(M_{v_1}) \le 2^{sd(v_1)} \cdot |\Phi_{v_1}| \cdot 2^{b(v_1) - k/2} ~. $
  Using Proposition \ref{p:2}, we have,
  $ \maxrank(M_{v_2}) \le 2^{a(v_2)} \le 2^{b(v_2)} ~.$
  Hence by Proposition \ref{p:4}, 
  \begin{eqnarray*}
  \maxrank(M_v) &=& \maxrank(M_{v_1}) \cdot \maxrank(M_{v_2}) \\ 
     & \le & 2^{sd(v_1)} \cdot |\Phi_{v_1}| \cdot 2^{b(v_1) + b(v_2)- k/2} \\
     & \le & 2^{sd(v)} \cdot |\Phi_{v}| \cdot 2^{b(v) - k/2} ~.
  \end{eqnarray*}
  \item $v$ is a $s$-sparse product gate with children $v_1$ and $v_2$ and $v$. 
  Without loss of generality it can be assumed that $v$ is not disjoint.
  
  Let us suppose that the product-sparse depth 
  of $v$ is $d$. Without loss of generality, assume that $v_2$ computes 
  a sparse polynomial having at most $2^s$ number of monomials. Since 
  $v$ is not disjoint, product-sparse depth of $v_1$ is at most 
  $d - 1$. Thus using Corollary \ref{c:3},
  \begin{eqnarray} \label{eqn:2}
  \maxrank(M_v) \le 2^s \cdot \maxrank(M_{v_1}) 
  \end{eqnarray}
  Consider the following cases based on whether $b(v) \le 2b(v_1)$ or 
  not.
  
  If $b(v) \le 2b(v_1)$, then $v_1$ is also $k$-weak. Therefore, by 
  induction hypothesis, 
  \[ \maxrank(M_{v_1}) \le 2^{s(d-1)} \cdot |\Phi_{v_1}| \cdot 
     2^{b(v_1) - k/2} \le 2^{s(d-1)} \cdot |\Phi_{v}| \cdot 
     2^{b(v) - k/2} ~.\]
  Thus, using Equation \ref{eqn:2},
  $ \maxrank(M_v) \le 2^{sd} \cdot |\Phi_v| \cdot 2^{b(v) - k/2} ~. $
  If $b(v) > 2b(v_1)$, then $b(v_1) < b(v)/2 < b(v) - k/2$ since $b(v) 
  \ge k$. Therefore using Proposition \ref{p:2}, $\maxrank(M_{v_1}) 
  \le 2^{a(v_1)} \le 2^{b(v_1)} < 2^{b(v) - k/2}$. Therefore, 
  $\maxrank(M_v) \le 2^s \cdot 2^{b(v) - k/2} \le 2^{sd} \cdot |\Phi_v| 
  \cdot 2^{b(v) - k/2} ~.$
  \item $v$ is a plus gate with children $v_1$ and $v_2$. 
  
  We know that $b(v) \le b(v_1) + b(v_2)$. Without loss of generality, 
  assume that $b(v) \le 2b(v_1)$ which implies that $v_1$ is also 
  $k$-weak. Hence by induction hypothesis,
  \[ \maxrank(M_{v_1}) \le 2^{sd(v_1)} \cdot |\Phi_{v_1}| \cdot 2^{b(v_1) - k/2} \le 
     2^{sd(v)} \cdot |\Phi_{v_1}| \cdot 2^{b(v) - k/2} ~. \]
  We consider the following cases based on whether $b(v) \le 2b(v_2)$ 
  or not:
  
  If $b(v) \le 2b(v_2)$, then $v_2$ is also $k$-weak. Hence by 
  induction hypothesis, 
  \[ \maxrank(M_{v_2}) \le 2^{sd(v_2)} \cdot |\Phi_{v_2}| \cdot 2^{b(v_2) - k/2} \le 
     2^{sd(v)} \cdot |\Phi_{v_2}| \cdot 2^{b(v) - k/2}. \]
  Hence by Proposition \ref{p:3},
  \begin{eqnarray*}
  \maxrank(M_v) & \le & \maxrank(M_{v_1}) + \maxrank(M_{v_2}) \\
     & \le & 2^{sd(v)} \cdot (|\Phi_{v_1}| + |\Phi_{v_2}|) \cdot 2^{b(v) - k/2} \\ 
     & \le & 2^{sd(v)} \cdot |\Phi_v| \cdot 2^{b(v) - k/2} ~.
  \end{eqnarray*}
  If $b(v) > 2b(v_2)$, then $b(v_2) < b(v)/2 < b(v) - k/2$ since 
  $b(v) \ge k$. Hence by proposition \ref{p:2}, 
  $ \maxrank(M_{v_2}) \le 2^{a(v_2)} \le 2^{b(v_2)} < 2^{b(v) - k/2} < 2^{sd(v)} \cdot 2^{b(v) - k/2} ~.$
  Hence by Proposition \ref{p:3}, 
  \[ \maxrank(M_v) \le 2^{sd(v)} \cdot (|\Phi_{v_1}| + 1) \cdot 2^{b(v) - k/2} \le 
     2^{sd(v)} \cdot |\Phi_v| \cdot 2^{b(v) - k/2}. \]
 \end{enumerate}
 \end{proof}

\end{document}